\newtheorem{proposition}{Proposition}
\newcommand{\diverge}{\to\infty}
\newcommand{\integers}{{\mathbb{Z}}}
\newcommand{\expect}[1]{\mathbb{E}\left[ #1 \right]}
\newcommand{\Bern}{{\rm Bern}}
\newcommand{\iid}{i.i.d.\xspace}
\newcommand{\pth}[1]{\left( #1 \right)}
\newcommand{\sth}[1]{\left\{ #1 \right\}}
\newcommand{\abth}[1]{\left | #1 \right |}
\newcommand{\indc}[1]{{\mathbf{1}_{\left\{{#1}\right\}}}}
\newcommand{\calA}{{\mathcal{A}}}
\newcommand{\calB}{{\mathcal{B}}}
\newcommand{\calE}{{\mathcal{E}}}
\newcommand{\calS}{{\mathcal{S}}}
\renewcommand{\tilde}{\widetilde}
\theoremstyle{plain}
\newtheorem{theorem}{Theorem}[section]
\newtheorem{lemma}[theorem]{Lemma}
\newtheorem{corollary}[theorem]{Corollary}
\theoremstyle{definition}
\newtheorem{example}{Example}
\theoremstyle{remark}
\newtheorem{remark}[theorem]{Remark}
\newcommand{\ignore}[1]{}
\begin{document}
\title{
Collaboratively Learning the Best Option on Graphs, Using Bounded Local Memory
}
\author{Lili Su, Martin Zubeldia, Nancy Lynch 
\thanks{This work was initialized by L. Su and N. Lynch. L. Su and N. Lynch formulated the problem, 
developed the methods. The analysis consists of two parts, the learnability (Section 4.1) and transient system 
behaviors (Section 4.2).  
The analysis in Section 4.1 was done by L. Su, and was commented by M. Zubeldia and N. Lynch. 
The analysis in Section 4.2 was done by M. Zubeldia, and was commented by L. Su. 
Theorem 4.8 and Corollary 4.9 in Section 4.2 was done by L. Su. 
Simulation was done by M. Zubeldia. 
The manuscript was written by L. Su, revised by M. Zubeldia, and commented by N. Lynch. Section 4.3 was added by M. Zubeldia.}\\
~
~ Department of Electrical Engineering and Computer Science\\
Massachusetts Institute of Technology ~ 
}
\date{}

\maketitle

\begin{abstract}

We consider multi-armed bandit problems in social groups wherein each individual has bounded memory 
and shares the common goal of learning the best arm/option. 
We say an individual learns the best option if eventually (as $t\diverge$) it pulls only the arm with the highest expected reward.
While this goal is provably impossible for an isolated individual due to bounded memory, 
we show that, in social groups, this goal can be achieved easily with the aid of social persuasion (i.e., communication) as long as the communication networks/graphs satisfy some mild conditions.
In this work, we model and analyze a type of learning dynamics which are well-observed in social groups. 
Specifically, under the learning dynamics of interest, an individual sequentially decides on which arm to pull next based on not only its private reward feedback but also the suggestion provided by a randomly chosen neighbor.
To deal with the interplay between the randomness in the rewards and in the social interaction, we employ the {\em mean-field approximation} method.
Considering the possibility that the individuals in the networks may not be exchangeable when the communication networks are not cliques, we go beyond the classic mean-field techniques and apply a refined version of mean-field approximation:  
\begin{itemize}
	\item  Using coupling 
	we show that, if the communication graph is connected and is either regular or has doubly-stochastic degree-weighted adjacency matrix, with probability $\to 1$ as the social group size $N \diverge $, every individual in the social group learns the best option.

	\item If the minimum degree of the graph diverges as $N \diverge $, over an arbitrary but given finite time horizon, the sample paths describing the opinion evolutions of the individuals are asymptotically independent.  In addition, the proportions of the population with different opinions converge to the unique solution of a system of ODEs. Interestingly, the obtained system of ODEs are invariant to the structures of the communication graphs. In the solution of the obtained ODEs, the proportion of the population holding the correct opinion converges to $1$ exponentially fast in time.
	 \end{itemize}
Notably, our results hold even if the communication graphs are highly sparse.
\end{abstract}

\section{Introduction}
\label{sec: intro}
Individuals often need to make a sequence of decisions among a fixed finite set of options (alternatives), whose rewards/payoffs can be regarded as stochastic, for example:
\begin{itemize}
\item Human society: In many economic situations, individuals need to make a sequence of decisions among multiple options, such as when purchasing perishable products \cite{10.2307/2171959} and when designing financial portfolios \cite{shen2015portfolio}. In the former case, the options can be the product of the same kind from different sellers. In the latter, the options are different possible portfolios.
\item Social insect colonies and swarm robotics: Foraging and house-hunting are two fundamental problems in social insect colonies, and both of them have inspired counterpart algorithms in swarm robotics \cite{pini2012multi}. During foraging, each ant/bee repeatedly refines its foraging areas to improve harvesting efficiency. House-hunting refers to the collective decision process in which the entire social group collectively identifies a high-quality site to immigrate to. For the success of house-hunting, individuals repeatedly scout and evaluate multiple candidate sites, and exchange information with each other to reach a collective decision.
\end{itemize}
Many of these sequential decision problems can be cast as {\em multi-armed bandit problems} \cite{lai1985asymptotically,auer2002finite,bubeck2012regret}. These have been studied intensively in the centralized setting, where there is only one player in the system, under different notions of performance metrics such as pseudo-regret, expected regret, simple regret, etc. \cite{lai1985asymptotically,auer2002finite,bubeck2012regret,mannor2004sample,robbins1956sequential,bubeck2012regret}. Specifically, a $K$-armed bandit problem is defined by the reward processes of individual arms/options $\pth{R_{k, i}: i \in \integers_+} $ for $k=1, \cdots, K$, where $R_{k, i}$ is the reward of the $i$--th pull of arm $k$.
At each stage, a player chooses one arm to pull and obtains some observable payoff/reward generated by the chosen arm. In the most basic formulation the reward process $\pth{R_{k, i}: i\in \integers_+}$ of each option is stochastic and successive pulls of arm $k$ yield $\iid$ rewards $R_{k, 1}, R_{k, 2}, \cdots$. 
Both asymptotically optimal algorithms and efficient finite-time order optimal algorithms 
have been proposed \cite{robbins1956sequential,auer2002finite,bubeck2012regret,shahrampour2017sequential}. These algorithms typically have some non-trivial requirements on individuals' memorization capabilities. 
For example, upper confidence bound (UCB) algorithm requires an individual to memorize the cumulative rewards of each arm he has obtained so far, the number of pulls of each arm, and the total number of pulls \cite{robbins1956sequential,auer2002finite}. Although this is not a memory-demanding requirement,
nevertheless, this requirement cannot be perfectly fulfilled even by humans, let alone by social insects, due to bounded rationality of humans, and limited memory and inaccurate computation of social insects. In human society, when a customer is making a purchase decision of perishable products, he may recall only the brand of product that he is satisfied with in his most recent purchase. Similarly, in ant colonies, during house-hunting, an ant can memorize only a few recently visited sites.

In this paper, we capture the above memory constraints by assuming an individual has only bounded/finite memory. The problem of multi-armed bandits with {\em finite memory constraint} has been proposed by Robbins \cite{robbins1956sequential} and attracted some research attention \cite{smith1965robbins,cover1968note,1054427}. 
The subtleties and pitfalls in making a good definition of memory were not identified until Cover's work \cite{1054427,cover1968note}. We use the memory assumptions specified in \cite{1054427}, which require that an individual's history  be summarized by a finite-valued memory. The detailed description of this notion of memory can be found in Section \ref{sec: model}. We say an individual learns the best option if eventually (as $t\diverge$) it pulls only the arm with the highest expected reward.

For an isolated individual, learning the best option is provably impossible \cite{1054427}.\footnote{A less restricted memory constraint -- stochastic fading memory -- is considered in \cite{xureinforcement}, wherein similar negative results when memory decays fast are obtained.}
Nevertheless, successful learning is still often observed in social groups such as human society \cite{10.2307/2171959}, social insect colonies \cite{nakayama2017nash} and swarm robotics \cite{pini2012multi}. This may be because in social groups individuals inevitably interact with others. In particular, in social groups individuals are able to, and tend to, take advantage of others' experience through observing their neighbors \cite{bandura1969social,rendell2010copy}. Intuitively, it appears that as a result of this social interaction, the memory of each individual is ``amplified'', and this {\em amplified shared memory} is sufficient for the entire social group to collaboratively learn the best option.

\vskip \baselineskip

\noindent{\bf Approach and key contributions}:
In this paper, we rigorously show that the above intuition is correct with a focus on the impact of the graph structures on the performance of collaboratively learning.
We study the learning dynamics wherein an individual makes its local sequential decisions on which arm to pull next based on not only its private reward feedback but also the suggestions provided by randomly chosen neighbors. Concretely, we assume time is continuous and each individual has an independent Poisson clock with common rate. The  Poisson clocks model is very natural and has been widely used \cite{ross2014introduction,shwartz1995large,kurtz1981approximation,hajek2015random}.  
When an individual's local clock ticks, it attempts to perform an update immediately via two steps:
\begin{enumerate}
	\item {\bf Sampling}:
	 {\bf If} the individual does not have any preference over the $K$ arms yet, {\bf then}:
	\begin{enumerate}
		\item With probability $\mu\in [0,1]$, the individual pulls one of the $K$ arms uniformly at random (uniform sampling).
		\item With probability $1-\mu$, the individual chooses one neighbor uniformly at random, and pulls the arm suggested by the chosen neighbor (peer recommendation);  
		pulls no arm if the chosen neighbor does not have any preference over the $K$ arms yet.
	\end{enumerate}
	{\bf else}
	The individual chooses one neighbor uniformly at random, and pulls the arm suggested
	 by the chosen neighbor (peer recommendation); pulls no arm if the chosen neighbor does not have any preference over the $K$ arms yet.
	\item {\bf Adopting}: {\bf If} the stochastic reward generated by the pulled arm is 1, {\bf then} the individual updates its preference to this arm.
\end{enumerate}
Note that if the awake individual pulls no arm, it will not get a reward 1; thus, its preference is unchanged.
Formal description can be found in Section \ref{sec: model}. Our learning dynamics are similar to those studied in \cite{Celis:2017:DLD:3087801.3087820} but with the following key differences: (1) We consider general communication graphs; in contrast,  only cliques are considered in \cite{Celis:2017:DLD:3087801.3087820}. (2) We consider continuous-time and asynchronous  dynamics, whereas in \cite{Celis:2017:DLD:3087801.3087820} all individuals are required to make update simultaneously. (3) Under our learning rules, with high probability, every individual learns the best option; whereas in \cite{Celis:2017:DLD:3087801.3087820}, the proportions of the population with the wrong opinions are bounded away from zero as long as $\mu>0$.
These differences are fundamental and require completely new analysis.

A key analytical challenge of our learning dynamics is to deal with the interplay of the randomness in the rewards and that in the social interaction. Comparing to the case when the communication graphs are cliques, this interplay is significantly complicated by the lack of exchangeability among the individuals on general communication graphs.  Observing this, we go beyond the classic mean-field techniques and apply a refined version of mean-field approximation:  
\begin{itemize}
	\item  We show that if the communication graph is connected and is either regular or has doubly-stochastic degree-weighted adjacency matrix, with probability $\to 1$ as the social group size $N \diverge $, every individual in the social group learns the best option with local memory of $O(K)$ states.

For any fixed $N$, we couple the evolutions of the number of the individuals with the correct opinion with a standard biased random walk, whose success probability is well-understood.
	\item If the minimum degree of the graph diverges as $N \diverge $, over an arbitrary but given finite time horizon,  the sample paths describing the opinion evolutions of the individuals are asymptotically independent. In addition, the proportions of the population with different opinions converge to the unique solution of a system of ODEs. 
	Interestingly, the obtained system of ODEs are invariant to the structures of the communication graphs. In the solution of the obtained ODEs, the proportion of the population holding the correct opinion converges to $1$ exponentially fast in time.
%
	
	The key challenge in the analysis of general graphs is that due to the lack of exchangeability, one needs to keep track of the opinion evolution of each individual. This complicates the state description of the system. 
	 \end{itemize}
Notably, our results hold even if the communication graphs are highly sparse.

It is easy to see that the time needed for the {\em entire} social group to learn the best option scales poorly in $N$ -- as the entire social group cannot learn the best option until every individual wakes up at least once.
%
Fortunately, it turns out that in many applications such as social insect colonies, it suffices to know the convergence rate until a sufficiently large fraction of the population have the correct opinion; this can be obtained by exploring the transient system behaviors.

\section{Related Work}
\begin{itemize}

\item {\bf Multi-armed bandits with bounded memory: }
Multi-armed bandits with finite memory has been proposed by Robbins \cite{robbins1956sequential} in the special setting where the bandit has only two arms. 
The goal there is to maximize the long-run proportion of heads obtained which is more relaxed than to identify the best arm asymptotically. 
Efforts have been made to extend and improve the results in the seminal work \cite{robbins1956sequential}. Cover \cite{cover1968note} constructed a time-dependent deterministic allocation rule with two arms, which is shown to be asymptotically uniformly best among the class of time-dependent finite memory rules. However, the implicit assumption is that  
keeping track of time is costless.  The subtleties and pitfalls in making a good definition of memory was not notified until Cover's work \cite{cover1968note}.
The finite-memory constraint defined by Cover and Hellman in the work \cite{1054427} is more relevant to our setting.
It is shown in \cite{1054427} that the optimal value of the long-run proportion over all local update functions and all allocation rules is bounded away from one, and approaches one only with local memory size goes to infinity.

	\item {\bf Plurality consensus: } Another line of work that is closely to our work is plurality consensus \cite{kempe2003gossip,ghaffari2016polylogarithmic,becchetti2015,becchetti2017simple,draief2012convergence}. Plurality consensus problem considers $N$ 
	agents each initially supporting an option in $\sth{1, 2, \cdots, K}$, and the goal is to learn, in a distributed fashion, the option with the largest initial support. 
	The key differences between our problem and the plurality consensus problem are: (1) In our problem, the information on the arm qualities arrives at the system on the fly; whereas plurality consensus does not consider this new information. (2) In our problem, the two sources of randomness -- the randomness in the rewards and the randomness in social interaction -- interplay with each other, and this interplay and associated dependency do not appear in the plurality consensus problem. (3) Plurality consensus relies crucially on the initial configuration of the system; whereas in our problem, the system converges to the right absorbing state even if the best arm is not the most popular one initially.

	\item  {\bf Mean-field approximation: }
Mean-field approximation has been a powerful tool for studying the behavior of large and complex stochastic models for a long time \cite{kurtz1970solutions}. There are many different variants of mean-field models; thus, share some non-trivial similarity \cite{xu2013supermarket,TX12,GTZ17,bramson1998state,wormald1995differential}.
%
In contrast to classic mean-field approximation where the nodes are exchangeable, as is the case with cliques,
to the best of our knowledge, mean-field approximation on general networks/graphs is less well-understood. Recent years have witness a flurry of research on the mean-field approximation for queueing networks with general graphs. In the context of resource pooling, Tsitsiklis and Xu \cite{tsitsiklis2017flexible} considered the scenario where the servers and queues are connected through a bipartite graph. In the context of load balancing, Mukherjee et al. \cite{mukherjee2018asymptotically}  studied the join-the-shortest queue (JSQ) policy on general graphs, and showed that this policy is optimal in some sense, as long as the expected degree diverges fast enough as a function of the number of servers. Similarly, Budhiraja et al. \cite{budhiraja2017supermarket} studied the local {\em power-of-$d$} scheme on graphs, and showed that as long as the minimum degree of the graph diverges, and the degree-weighted adjacency matrix of the graph is asymptotically doubly-stochastic, the occupancy process converges to the same system of ODEs as that on the complete graph.

\item {\bf Propagation of chaos: }
In general, a powerful stepping stone to establishing mean-field approximation results when the network/graph is not symmetric, is the asymptotic independence of the local state of the agents in the graph (this is also called ``propagation of chaos''). This has been done in the context of weakly interacting particles \cite{Bhamidi2018}, queueing systems  \cite{budhiraja2017supermarket,asymptoticIndependence,delayAsymptotics},
and in other general systems \cite{chaosHypothesis,propagationOfChaos}.
	
\end{itemize}

\section{Model}
\label{sec: model}
In this section, we formalize the social group, the learning goal, and the notion of bounded memory,  describe the learning dynamics that we are interested in, and define a continuous-time Markov chain. 

\subsubsection*{\bf Social group}

A social group consists of $N$ homogeneous individuals/agents that are connected through an undirected graph $G_N = \pth{[N], \calE^N}$, where $[N]:= \sth{1, \cdots, N}$ and $\calE^N$ is a collection of edges among $[N]$. Each agent $i \in [N]$ has a set of neighbors $V_i^N$. Let
\begin{align}
\label{eq: min degree}
D_{\min}^N := \min_{i\in[N]} \left|V_i^N\right|
\end{align}
be the minimum degree of $G_N$.
Let $\bm{D}^N$ be the diagonal matrix of degrees, and $\bm{A}^N$ be the adjacency matrix.
With a little abuse of terminology, we say graph $G_N$ is doubly-stochastic if the degree-weighted adjacency matrix
\begin{align*}
(\bm{D}^N)^{-1} \bm{A}^N
\end{align*}
is doubly-stochastic. Equivalently, graph $G_N$ is doubly-stochastic if
\[
\sum_{j \in V_i^N} \frac{1}{|V_j^N|} = \sum_{j \in V_i^N} \frac{1}{|V_i^N|} =1, ~ ~ ~ ~ \forall\, i\in[N].
\]

\subsubsection*{\bf Learning goal}

The agents in the social group want to collaboratively solve the $K$-armed stochastic bandit problems, wherein the reward processes of the $K$ arms/options are Bernoulli processes with parameters $p_1, \cdots, p_K$. 
If arm $a_k$ is pulled at time $t$, then reward $R_t \sim \Bern\pth{p_k}$, i.e.,
\begin{align*}
R_{t} =
\begin{cases}
1, & \text{with probability }p_k; \\
0, & \text{otherwise.}
\end{cases}
\end{align*}
Initially the distribution parameters $p_1, \cdots, p_K$ are unknown to any agent. We assume the arm with the highest parameter $p_k$ is unique. Without loss of generality, let $a_1$ be the unique best arm and  $p_1>p_2\geq \cdots p_K\geq 0$. 
Each agent has an independent Poisson clock\footnote{The Poisson clocks model is very natural and has been widely adopted for modeling \cite{ross2014introduction,shwartz1995large,kurtz1981approximation,hajek2015random} natural dynamics.  } with common rate $\lambda$, and attempts to pull an arm immediately when its local clock ticks.
We say an agent learns the best option if, as $t\diverge$, it pulls only the arm with the highest expected reward, i.e., $a_1$.

\subsubsection*{\bf Bounded memory}

We assume that each agent has finite/bounded memory \cite{1054427}. 
Measuring the memory size in terms of states, as an alternative to bits, is rather standard in many natural dynamics \cite{1054427,ghaffari2016polylogarithmic}.
We say an agent has a memory of size $m$ if its experience is completely summarized by an $m$-valued variable $M\in \sth{0, \cdots, m-1}$.
As a result of this, an agent sequentially decides on which arm to pull next based on only (i) its memory state and (ii) the information it gets through social interaction. The memory state may be updated with the restriction that only (a) the current memory state, (b) the current choice of arm, and (c) the recently obtained reward, are used for determining the new state. In the learning dynamics considered in this paper, it suffices to have $m=K+1$.

\subsubsection*{\bf Learning dynamics}

In the learning dynamics under consideration, each agent keeps two variables:
\begin{itemize}
	\item a local memory variable $M$ that takes values in $\sth{0, 1, \cdots, K}$. If $M=0$, the agent does not have any preference over the $K$ arms; if $M=k \in \{1, \cdots, K\}$, it means that tentatively the agent prefers arm $a_k$ over others.
	\item an arm choice variable $c$ that takes values in $\sth{0, 1, \cdots, K}$ as well. If $c=0$, the agent pulls no arm; if $c=k \in \{1, \cdots, K\}$, the agent chooses arm $a_k$ to pull next.
\end{itemize}
Note that $M$ is the persistent memory while $c$ is a temporary variable. There are a variety of choices in initializing $M$, either deterministically or randomly. Our results account for different choices of initialization.  
%
For the temporary variable $c$, initialize $c=M$. 


When the clock at agent $i$ ticks at time $t$, we say agent $i$ obtains the memory refinement token. With such a token, agent $i$ refines its memory $M$ according to Algorithm \ref{alg: 1} via a two-step procedure. 
The first two {\bf if}--{\bf else} clauses describe how to choose an arm to pull next: 
If an agent does not have any preference ( i.e., $M=0$), $c$ is determined through a combination of uniform sampling and peer recommendation; otherwise, $c$ is completely determined by peer recommendation, see Section \ref{sec: intro} for the notions of {\em uniform sampling} and {\em peer recommendation}. Note that during peer recommendation, if the chosen peer does not have any preference over the $K$ arms yet, the memory of the awake agent remains unchanged.  The last {\bf if}--{\bf else} clause says that as long as the reward obtained by pulling the arm is 1 (i.e., $R_t=1$), then $M\gets c$; otherwise, $M$ is unchanged.

\begin{algorithm}

\caption{Collaborative Best Option Learning (at agent $i$)}
\label{alg: 1}
{\bf Input}:  $K$, $\mu$, and $V_i^N$\;    
~ {\em Local variables}: $M \in \sth{0, 1, \cdots, K}$  and $c\in \sth{0, 1, \cdots, K}$\;
~ 
Initialize $c=M$\;

\vskip 0.2\baselineskip

\SetKwFunction{FSO}{PeerRecommendation}

When local clock ticks: \\

\eIf{$M=0$}
{
	With probability $\mu$, set $c$ to be one of the $K$ arms uniformly at random\;

    With probability $1-\mu$, $c \gets$ \FSO \;  
}
{$c \gets$  \FSO \;}

\eIf{$c=0$}
{Pull no arm\;}
{Pull arm $c$\;}

\vskip 0.8\baselineskip

\eIf{$R_t=1$}{$M \gets c$;}
{$M$ unchanged\; }

\vskip 0.6\baselineskip
\FSO{}
{ \\
\vskip 0.6\baselineskip
Choose one neighbor $i^{\prime}$ 
from $V_i^{N} $
uniformly at random\;

\KwRet $M^{\prime}$;     ~~~~~  \%\% $M^{\prime}$ is the memory state of $i^{\prime}$\;

}
\end{algorithm}
%
Similar gossip model was used in \cite{ghaffari2016polylogarithmic}. 
In our model, different from \cite{ghaffari2016polylogarithmic}, the
reward information arrives at the system on the fly.

\subsubsection*{\bf System state}

Due to the lack of exchangeability among agents, one needs to keep track of the opinion evolution of each agent. 
For a given $N$, let $\bm{X}^N(t)\in \{0,1\}^{N \times (K+1)}$ denote the state of the system at time $t$ defined as 
\begin{align}
\label{eq:dl mc 1}
\bm{X}^N(t) ~ : = \pth{\bm{X}_1^N (t), \cdots, \bm{X}_N^N (t)},
\end{align}
where 
\begin{align}
\label{eq:dl mc}
\bm{X}_i^N (t) = \Big(\bm{X}_{i,0}^N (t), \bm{X}_{i,1}^N (t), \cdots, \bm{X}_{i,K}^N (t)\Big)^{\top} \in \{0,1\}^{K+1}
\end{align}
represents agent $i$'s opinion at time $t$ with
\begin{align*}
\bm{X}_{i,k}^N (t)=1, ~ and ~ ~ \bm{X}_{i,k^{\prime}}^N (t)=0, \forall k^{\prime}\not=k
\end{align*}
if $M=k$ at agent $i$ at time $t$, i.e., the memory state of agent $i$ at time $t$ is $k$.
Intuitively, if $\bm{X}_{i,k}^N (t)=1$, we say agent $i$ prefers arm $a_k$ at time $t$. Here the vectors under discussion are column vectors. Note that each state $x$ is a $N\times (K+1)$ matrix of entries either 0 or 1, and for each row $i\in [N]$, the entries sum up to 1.
%
Throughout the process, $\bm{X}_{i,k}^N (t)$ is ``fluctuating'' between 0 and 1 as the local clock ticks.

Under Algorithm \ref{alg: 1}, the evolution of the state $\bm{X}^N (\cdot)$ is a continuous-time Markov chain with the following transition rates. For all $i\in[N]$, and for all $k=1,\dots,K$, we have
\begin{align*}
  q_{x, x+e_{ik}-e_{i0}} &= \lambda p_k \left( \frac{\mu}{K} + \left(1-\mu\right) \frac{1}{|V_i^N|} \sum\limits_{\ell\in V_i^N} \indc{x_{\ell, k} =1} \right) \indc{x_{i, 0} =1}, \\
  q_{x, x+e_{ik}-e_{ij}} &= \lambda p_k \left( \frac{1}{|V_i^N|} \sum\limits_{\ell\in V_i^N} \indc{x_{\ell, k} =1} \right) \indc{x_{i, j} =1}, \quad \forall \, j\neq k,
\end{align*}
where $e_{ik}$ is the unit matrix with a single one in position $(i,k)$, and zeros elsewhere. Here, $q_{x, x+e_{ik}-e_{i0}}$ is the transition rate for agent $i$ to switch from no arm preference to preferring arm $a_k$, and $q_{x, x+e_{ik}-e_{ij}}$ is the transition rate for agent $i$ to switch from preferring arm $a_j$ to preferring arm $a_k$.
Furthermore, the Markov chain $\bm{X}^N (\cdot)$ admits the following sample path construction. For all $i\in[N]$ and for all $k=0,\dots,K$, we have
\begin{align*}
&\bm{X}_{i,k}^N (t)  = \bm{X}_{i,k}^N (0) \\
& \quad +
\int\limits_{[0,K)\times[0,t]} \pth{\indc{\bm{X}_{i,k}^N (s^-) =0} \indc{y \in C_{i,k}^{N, +} (s^-)} - \indc{\bm{X}_{i,k}^N (s^-) =1} \indc{y \in C_{i,k}^{N, -} (s^-)}} \mathcal{N}_i(dy,ds),
\end{align*}
where $\bm{X}_{i,k}^N (0)$ is the initial condition of $\bm{X}_{i,k}^N (t)$, $\mathcal{N}_i$ is a two-dimensional Poisson process of rate $\lambda$ over the set $[0,K)\times[0,\infty)$, and
\begin{align*}
C_{i,0}^{N, +}(t) &:= \emptyset, \\
C_{i,k}^{N, +} (t) &:= \left[k-1,\,\, k-1+ \frac{\mu p_k}{K}\indc{\bm{X}_{i,0}^N(t)=1} + \left(1-\mu\indc{\bm{X}_{i,0}^N(t)=1}\right) \frac{1}{|V_i^N|} \sum\limits_{\ell\in V_i^N} p_k \indc{\bm{X}_{\ell,k}^N (t) =1} \right), \\
C_{i,0}^{N,-}(t) &:= \bigcup_{j=1}^K \left[j-1,\,\, j-1+ \frac{\mu p_j}{K} + \left(1-\mu\right) \frac{1}{|V_i^N|} \sum\limits_{\ell\in V_i^N} p_j \indc{\bm{X}_{\ell, j}^N (t) =1} \right), \\
C_{i,k}^{N, -} (t) &:= \bigcup_{\overset{j=1}{j\neq k}}^K \left[j-1,\,\, j-1+  \frac{1}{|V_i^N|} \sum\limits_{\ell\in V_i^N} p_j \indc{\bm{X}_{\ell, j}^N (t) =1} \right).
\end{align*}
It turns out that this sample path construction significantly simplifies the statements and proofs of the results on the transient system behaviors.

\section{Main Results} \label{sec: main results}

In this section, we present the main results regarding {\em learnability} -- the probability that the entire social group learns the best option -- (subsection \ref{sec:learnability}), and regarding the transient of opinion evolution before reaching a consensus (subsection \ref{sec:transient}). Furthermore, in subsection \ref{sec:learnabilityVsTransient} we highlight the differences and similarities between the assumptions for the two types of results.

\subsection{Learnability}\label{sec:learnability}
For $k=0,\dots,K$, let
\begin{equation*}
  Z^N_k(t) := \sum\limits_{i=1}^N \bm{X}_{i,k}^N (t)
\end{equation*}
be the number of agents that prefer arm $a_k$ at time $t$. We define the success event $E^N$ as:
\begin{align}
\label{event: success}
\nonumber
E^N &\triangleq \sth{\text{every agent eventually learns the best option} } \\
&= \sth{\lim_{t\to\infty} Z^N_1(t) ~ = ~ N}.
\end{align}
Given the transition rates of the Markov chain $\bm{X}^N (\cdot)$, we have that, for $k=0, \cdots, K$, the process $Z^N_k(\cdot)$ jumps upwards with rate
\begin{align*}
&\sum_{i=1}^N \indc{\bm{X}_{i,k}^N(t) = 0}  \lambda p_k \left(
\frac{\mu}{K} \indc{\bm{X}_{i,0}^N(t) = 1} 
+ \left(1-\mu \indc{\bm{X}_{i,0}^N(t) = 1}\right) \frac{1}{D_i^N} \sum_{j \in V_i^N} \indc{\bm{X}_{j,k}^N(t) = 1}\right),
\end{align*}
%
and downwards with rate
\begin{align*}
&\sum_{i=1}^N \indc{\bm{X}_{i,k}^N(t) = 1}  \lambda  \left( \frac{\mu}{K} \indc{\bm{X}_{i,0}^N(t) = 1} 
+ \left(1-\mu \indc{\bm{X}_{i,0}^N(t) = 1}\right) \frac{1}{D_i^N} \sum_{j \in V_i^N} \sum_{k^{\prime}: 1\leq k^{\prime} \leq K, \& k^{\prime} \not=k}\indc{\bm{X}_{j,k^{\prime}}^N(t) = 1} p_{k^{\prime}} \right).
\end{align*}
It is easy to see that $Z^N_k(t)=N$ is an absorbing state of the Markov chain $\bm{X}^N(\cdot)$, for all $k=1,\dots,K$.

For ease of exposition, we first consider the simplified scenario wherein the memory states of the individuals are initialized deterministically, and there exists a positive fraction of agents with the correct opinion -- initially preferring the arm $a_1$ (i.e., the best arm). It turns out that the probability that eventually every agent pulls only arm $a_1$ increases exponentially with $N$. This is formalized in the following result.

\begin{theorem}
\label{thm: eventual learn}
Suppose that the graph $G_N$ is connected, and suppose that $Z_1^N(0)\geq c_0 N$ for some $c_0\in(0,1]$.
\begin{itemize}
	\item If $G_N$ is regular, then, for any $\mu \in [0,1]$, we have
\begin{align*}
&\mathbb{P}\big(E^N\big)\geq 1- \pth{\frac{p_1}{p_2}}^{-c_0N}.
\end{align*}

\item If $G_N$ is doubly-stochastic, i.e., if
\begin{align*}
\sum_{j \in V_i^N} \frac{1}{|V_j^N|} = 1, \qquad \forall\, i\in[N],
\end{align*}
then, for any $\mu\in[0,1-p_2/p_1]$, we have
\begin{align*}
&\mathbb{P}\big( E^N \big) \geq 1- \pth{\frac{(1-\mu + \frac{\mu}{K})p_1}{p_2}}^{-c_0N}.
\end{align*}
\end{itemize}
\end{theorem}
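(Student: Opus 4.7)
The plan is a Gambler's-ruin-style supermartingale argument applied to $Z_1^N(\cdot)$, which is a birth-death process on $\{0,\dots,N\}$ since every transition of $\bm X^N$ flips a single coordinate of a single agent. Let $\alpha(x),\beta(x)$ denote its instantaneous up- and down-rates, read off from the rates for $\bm X^N$ given in the paper. Writing $S_k := \{i: x_{i,k}=1\}$, $U := \bigcup_{k\geq 2}S_k$, and $r_i := |V_i^N\cap S_1|/D_i^N$, the bound $p_{k'}\leq p_2$ for $k'\geq 2$ together with $U\subseteq [N]\setminus S_1$ gives
\begin{align*}
\beta(x) \leq \lambda p_2\sum_{i\in S_1}\frac{|V_i^N\cap ([N]\setminus S_1)|}{D_i^N},
\end{align*}
while $\tfrac{\mu}{K}+(1-\mu)r_i \geq (1-\mu+\tfrac{\mu}{K})r_i$ (which uses $r_i\leq 1$) yields
\begin{align*}
\alpha(x) \geq \lambda p_1\sum_{i\in U}r_i + \lambda p_1\sum_{i\in S_0}\Big(\tfrac{\mu}{K}+(1-\mu)r_i\Big) \geq \lambda p_1\Big(1-\mu+\tfrac{\mu}{K}\Big)\sum_{i\notin S_1} r_i.
\end{align*}

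The crux is to compare the two edge-sums that now appear. Setting $P := (\bm D^N)^{-1}\bm A^N$, they equal $\mathbf 1_{S_1}^\top P\mathbf 1_{[N]\setminus S_1}$ and $\mathbf 1_{[N]\setminus S_1}^\top P\mathbf 1_{S_1}$, respectively. In the regular case $P$ is symmetric; in fact, discarding the $S_0$ contribution from $\alpha$ and using the undirected-graph symmetry $\sum_{i\in S_1}|V_i^N\cap U| = \sum_{i\in U}|V_i^N\cap S_1|$ gives $\beta(x)\leq \lambda p_2\sum_{i\in U}r_i$, so $\alpha(x)/\beta(x)\geq p_1/p_2$ uniformly in $\mu\in[0,1]$. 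In the doubly-stochastic case, $P\mathbf 1=\mathbf 1$ and $\mathbf 1^\top P=\mathbf 1^\top$ imply
\begin{align*}
\mathbf 1_{S_1}^\top P\mathbf 1_{[N]\setminus S_1} = |S_1| - \mathbf 1_{S_1}^\top P\mathbf 1_{S_1} = \mathbf 1_{[N]\setminus S_1}^\top P\mathbf 1_{S_1},
\end{align*}
so $\beta(x)\leq \lambda p_2\sum_{i\notin S_1} r_i$ and $\alpha(x)/\beta(x)\geq (1-\mu+\tfrac{\mu}{K})\,p_1/p_2$; the hypothesis $\mu\leq 1-p_2/p_1$ keeps this ratio strictly above $1$. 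Let $c$ denote the resulting lower bound on $\alpha/\beta$ and set $r := 1/c\in(0,1)$.

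Finally, $M(t):=r^{Z_1^N(t)}$ has generator
\begin{align*}
\mathcal L\big(r^{Z_1^N(\cdot)}\big)(x) = \tfrac{1-r}{r}\,r^{Z_1^N(x)}\big(\beta(x)-r\alpha(x)\big) \leq 0,
\end{align*}
so $M(\cdot)$ is a $[r^N,1]$-valued supermartingale and converges a.s.\ to some $M_\infty\in\{r^0,\dots,r^N\}$; hence $Z_1^N(t)$ is eventually constant at some $Z_1^\infty\in\{0,\dots,N\}$ a.s. Connectedness of $G_N$ rules out interior limits: for any proper non-empty $S_1$ there is an edge from $S_1$ to $[N]\setminus S_1$, which keeps $\alpha(\bm X^N(t))$ bounded below by a positive state-dependent constant and therefore forces further up-jumps of $Z_1^N$. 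So $Z_1^\infty\in\{0,N\}$ a.s., and
\begin{align*}
\mathbb E[M_\infty] = \mathbb P(Z_1^\infty=0) + r^N\,\mathbb P(Z_1^\infty=N) \leq M(0)\leq r^{c_0 N}
\end{align*}
gives $\mathbb P(Z_1^\infty=0)\leq r^{c_0 N}$. Since $Z_1^N=N$ is absorbing, $\{Z_1^\infty=N\}=E^N$, and $\mathbb P(E^N)\geq 1 - r^{c_0 N}$, which is the claimed bound in each part. The main obstacle is the doubly-stochastic rate comparison: the identity above substitutes for the simple edge-symmetry of the regular case and explains why the $\mu$-dependent factor $(1-\mu+\mu/K)$ is the only additional loss relative to the regular bound.
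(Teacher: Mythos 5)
Your proposal is correct and takes essentially the same route as the paper: the heart of both arguments is the uniform drift comparison $\alpha(x)/\beta(x)\geq p_1/p_2$ (regular) and $\geq(1-\mu+\mu/K)\,p_1/p_2$ (doubly-stochastic), and your identity $\mathbf 1_{S_1}^\top P\mathbf 1_{[N]\setminus S_1}=\mathbf 1_{[N]\setminus S_1}^\top P\mathbf 1_{S_1}$ is exactly the flow-balance fact of Proposition~\ref{prop: symmetry} underlying Lemma~\ref{lm: random walker}. The remaining differences are cosmetic: you conclude via the exponential supermartingale $r^{Z_1^N(t)}$ instead of coupling with a discrete-time biased walk and citing gambler's ruin, and your single matrix identity replaces the paper's case analysis (Cases 1, 2-1, 2-2) with a slightly cleaner derivation of the same bound.
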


The proof consists in coupling the process $Z^N_1(\cdot)$ with a smaller but simpler biased random walk, and using classical results on biased random walk  to lower bound the probability of hitting $N$. The proof is deferred to Appendix \ref{app: limit}.

\begin{remark}
Intuitively, in a doubly-stochastic graph the ``outflow'' of agents equals the ``inflow'' of agents. \footnote{In fact, this assumption is one of the standard assumptions on reaching average consensus. } In particular, every regular graph is also doubly-stochastic.
\end{remark}

Note that Theorem \ref{thm: eventual learn} says that the probability of ``every agent eventually learns the best option'' grows to 1 exponentially fast as the group size $N$ increases. In order to maximize the lower bound in Theorem \ref{thm: eventual learn}, we can simply choose $\mu=0$ -- removing the uniform sampling. However, as we will show next, it is crucial to have $\mu >0$ when $Z_1^N(0)=0$.

\paragraph{\bf Generalizations}

In Theorem \ref{thm: eventual learn} we restrict our attention to initial conditions where a positive proportion of agents prefer arm $a_1$. This result can be easily generalized to include initial conditions where $Z_1^N(0)=0$ and $Z_0^N(0)\geq c_0 N$, for some $c_0\in(0,1]$. The key to this generalization is the following lemma.

\begin{lemma}
\label{lm: initial wealth}
Suppose that $Z_0^N(0)\geq c_0 N$ for some $c_0\in(0,1]$, and that $\mu>0$. Then,  for any $C \in (0, 1)$, we have
\[
\mathbb{P}\left( Z_1^N\pth{\frac{1}{\lambda}} \geq (1-C)\frac{\mu c_0p_1}{e K}N \right) \geq 1- \exp\pth{-2\pth{\frac{C\mu c_0p_1}{e K}}^2N}.
\]
\end{lemma}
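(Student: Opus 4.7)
The plan is to keep only the simplest, most robust source of progress: direct uniform-sampling transitions from state $0$ to state $1$ within the short window $[0, 1/\lambda]$, discarding peer-recommendation pathways entirely. For each agent $i$ with $\bm{X}_{i,0}^N(0)=1$, I will isolate the event $A_i$ that the Poisson clock $\mathcal{N}_i$ has \emph{exactly one} arrival in $[0, 1/\lambda]$, and at that arrival the agent performs uniform sampling (which happens with probability $\mu$ since it is still in state $0$ at the moment of the tick), selects arm $a_1$ (probability $1/K$), and observes reward $1$ (probability $p_1$). Under $A_i$, the memory jumps from $0$ to $1$ at the single tick and, because the sample-path construction shows that $\bm{X}_i^N$ can only be modified at arrivals of $\mathcal{N}_i$, stays at $1$ throughout, so $\bm{X}_{i,1}^N(1/\lambda)=1$.

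Using the Poisson mass $e^{-1}$ of exactly one arrival in a mean-$1$ interval and independence of the sampling and reward marks, each event has probability $q := \mu p_1/(eK)$. Moreover, since the Poisson processes $\mathcal{N}_i$ are independent across $i$, the indicators $\indc{A_i}$ are independent Bernoulli$(q)$. Conditioning on any initial configuration with $Z_0^N(0)\geq c_0 N$ and letting $\mathcal{I}=\{i:\bm{X}_{i,0}^N(0)=1\}$, we obtain the stochastic lower bound
\[
Z_1^N\!\pth{1/\lambda} \;\geq\; \sum_{i\in\mathcal{I}} \indc{A_i},
\]
the right-hand side dominating a $\mathrm{Binomial}(c_0 N, q)$ random variable.

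Applying Hoeffding's inequality to this sum of $c_0 N$ independent $[0,1]$-valued random variables with mean $c_0 N q$ yields
\[
\mathbb{P}\!\pth{ Z_1^N(1/\lambda) < (1-C) c_0 N q } \;\leq\; \exp\!\pth{-2 C^2 c_0 q^2 N}.
\]
Because $c_0 \in (0,1]$ implies $c_0 \geq c_0^2$, the exponent satisfies $-2C^2 c_0 q^2 N \leq -2(C c_0 q)^2 N$, and substituting $q = \mu p_1/(eK)$ gives the stated bound.

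There is no real obstacle here; the two items to check carefully are (i) that $\bm{X}_{i,1}^N(\cdot)$ is not altered between the single tick and time $1/\lambda$, which is immediate from the sample-path representation since only $\mathcal{N}_i$-arrivals drive coordinate $i$; and (ii) that the events $\{A_i\}_{i\in\mathcal{I}}$ are mutually independent, which follows from independence of the driving Poisson processes and the fact that $A_i$ depends only on the restriction of $\mathcal{N}_i$ to the deterministic region $[0,K)\times[0,1/\lambda]$.
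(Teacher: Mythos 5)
Your proof is correct and follows essentially the same route as the paper's: both isolate, for each agent initially in state $0$, the independent event of exactly one clock tick in $[0,1/\lambda]$ at which uniform sampling selects $a_1$ and yields reward $1$ (probability $\mu p_1/(eK)$), and then apply Hoeffding. The only cosmetic difference is that you apply Hoeffding to the $c_0N$ relevant agents and recover the stated exponent via $c_0\geq c_0^2$, whereas the paper sums the indicators over all $N$ agents directly.
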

The intuition behind Lemma \ref{lm: initial wealth} is that when $t$ is sufficiently small, successful memory state updates of the agents that initially have no preference over the $K$ arms mainly rely on uniform sampling rather than peer recommendation. Thus, successful memory updates of those agents are likely to be {\em independent} of each other, and have some nice concentration properties. We justify this intuition in Appendix \ref{app: initial wealth}.\\

Using Lemma \ref{lm: initial wealth}, we can couple our original random walk and a standard random walk at time $\frac{1}{\lambda}$, and obtain an exponential learnability result analogous to Theorem \ref{thm: eventual learn}.\\

Furthermore, we can generalize Theorem \ref{thm: eventual learn} and Lemma \ref{lm: initial wealth} for the case where the initial conditions $\{\bm{X}^N_i(0)\}_{i\in[N]}$ are i.i.d. with respect to a fixed
distribution $q=(q_0,q_1,\dots,q_K)$.

\begin{lemma}
Suppose that $\{\bm{X}^N_i(0)\}_{i\in[N]}$ are i.i.d. with respect to a fixed
distribution $q=(q_0,q_1,\dots,q_K)$, with $q_0+q_1>0$. Then, for every $C\in(0,1)$, we have
\begin{align*}
&\mathbb{P}\Big( Z^N_0(0) + Z^N_1(0) \geq (1-C)(q_0+q_1)N \Big)
~ \geq ~ 1- \exp\Big( - 2C^2(q_0+q_1)^2 N \Big).
\end{align*}
\end{lemma}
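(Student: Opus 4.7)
The plan is to recognize that $Z^N_0(0) + Z^N_1(0)$ is a sum of i.i.d.\ Bernoulli random variables, and then apply Hoeffding's inequality to obtain the stated lower tail bound. This is essentially a concentration of measure argument, requiring no structural properties of the graph $G_N$ or the learning dynamics — only the assumption on the initial distribution.

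More concretely, for each $i\in[N]$, define $Y_i \defeq \indc{\bm{X}^N_{i,0}(0) = 1} + \indc{\bm{X}^N_{i,1}(0) = 1}$. Since the state of agent $i$ at time $0$ is a one-hot vector drawn according to $q$, we have $Y_i \sim \Bern(q_0 + q_1)$, and by the independence assumption the family $\{Y_i\}_{i\in[N]}$ is i.i.d. Furthermore, by definition of $Z^N_0$ and $Z^N_1$,
\begin{align*}
Z^N_0(0) + Z^N_1(0) \;=\; \sum_{i=1}^N Y_i,
\end{align*}
which has mean $N(q_0 + q_1)$.

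I would then invoke Hoeffding's inequality for bounded i.i.d.\ random variables (here valued in $\{0,1\}$): for any $t > 0$,
\begin{align*}
\mathbb{P}\left( \sum_{i=1}^N Y_i \leq N(q_0+q_1) - t \right) \;\leq\; \exp\pth{-\frac{2t^2}{N}}.
\end{align*}
Setting $t = C(q_0+q_1) N$ yields
\begin{align*}
\mathbb{P}\Big( Z^N_0(0) + Z^N_1(0) \leq (1-C)(q_0+q_1) N \Big) \;\leq\; \exp\Big(- 2 C^2 (q_0+q_1)^2 N\Big),
\end{align*}
and taking the complementary event gives the lemma. There is essentially no obstacle here: the only subtle point is confirming that the random variables $Y_i$ are independent across $i$, which follows directly from the i.i.d.\ assumption on $\{\bm{X}^N_i(0)\}_{i\in[N]}$. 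The lemma then combines with Lemma~\ref{lm: initial wealth} and Theorem~\ref{thm: eventual learn} (via coupling at time $1/\lambda$) to extend the exponential learnability guarantee to random initial conditions.
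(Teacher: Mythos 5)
Your proposal is correct and coincides with the paper's intended argument: the paper explicitly states that this lemma follows from ``a simple application of Hoeffding's inequality'' and omits the details, which are exactly the ones you supply (i.i.d.\ Bernoulli$(q_0+q_1)$ indicators, one-sided Hoeffding with deviation $t=C(q_0+q_1)N$). Nothing further is needed.
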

The proof is a simple application of Hoeffding's inequality, and is omitted.

\subsection{Transient System Behaviors}\label{sec:transient}
In addition to {\em learnability}, it is also important to characterize the transient behavior of our learning dynamics, 
i.e., at a given time $t$, what fraction of agents are preferring the best arm, the second best arm, etc. This is because in applications such as biology, chemistry, and networking, knowing the trajectories of the dynamics of interests usually provide fundamental insights on those systems. This subsection is devoted to characterizing this transient system behaviors.\\


First, we will establish a local approximation result. For $i\in[N]$, and for $k=0,\dots,K$, we define the coupled processes
\begin{align}
\bm{X}_{i,k} (t) & = \bm{X}_{i,k}^N (0) +
\int\limits_{[0,K)\times[0,t]} \pth{\indc{\bm{X}_{i,k} (s^-) =0} \indc{y \in C_{i,k}^+(s^-)} -  \indc{\bm{X}_{i,k} (s^-) =1} \indc{ y \in C_{i,k}^-(s^-)} } \mathcal{N}_i(dy,ds). \label{eq:limitProcess}
\end{align}
where
\begin{align*}
C_{i,0}^+(t) &:= \emptyset, \\
C_{i,k}^+(t) &:= \left[k-1,\,\, k-1+ \frac{\mu p_k}{K}\indc{\bm{X}_{i,0}(t)=1} + p_k \big(1-\mu\indc{\bm{X}_{i,0}(t)=1}\big) \mathbb{P}\left( \bm{X}_{i, k} (t) =1 \right) \right), \qquad \forall\, k\geq 1, \\
C_{i,0}^-(t) &:= \bigcup_{j=1}^K \left[j-1,\,\, j-1+ \frac{\mu p_j}{K} + (1-\mu) p_j \mathbb{P}\left( \bm{X}_{i, j} (t) =1 \right) \right),\\
C_{i,k}^-(t) &:= \bigcup_{\overset{j=1}{j\neq k}}^K \Big[j-1,\,\, j-1+ p_j \mathbb{P}\left( \bm{X}_{i, j} (t) =1 \right) \Big), \qquad \forall\, k\geq 1.
\end{align*}
Note that the processes $\bm{X}^N(\cdot)$ and $\bm{X}(\cdot)$ are coupled through the initial conditions $\bm{X} (0) = \bm{X}^N (0)$
and through the underlying Poisson processes $\mathcal{N}_i$ (they are all the same).
However, unlike $C_{i,k}^{N,-}(t)$ and $C_{i,k}^{N,+}(t)$, the sets $C_{i,k}^{-}(t)$ and $C_{i,k}^{+}(t)$ only depend on the local state. This corresponds to a setting where an agent, instead of asking a neighbor for its state, it draws a new state (independent from the states of its neighbors), according to the distribution $\big(\mathbb{P}\left( \bm{X}_{i, 0} (t) =1 \right),\dots,\mathbb{P}\left( \bm{X}_{i, K} (t) =1 \right)\big)$.

We now show that, if the minimum degree of the graph $G_N$ is large enough, and if the initial conditions are i.i.d., then the coordinate processes $\{\bm{X}^N_i(\cdot)\}_{i\in[N]}$ can be approximated by the i.i.d. coordinate processes $\{\bm{X}_i(\cdot)\}_{i\in[N]}$ defined above. This is formalized in the following result.

\begin{theorem}\label{thm:localConvergence}
Fix some time $T>0$. Suppose that the initial conditions $\{\bm{X}_{i}^N (0)\}_{i\in[N]}$ are i.i.d. with respect to a fixed distribution $q=(q_0,\dots,q_K)$, such that
$\mathbb{P}(\bm{X}_{i,k}^N (0)=1)=q_k$,
for all $k=0,\dots,K$, and for all $i\in[N]$. Then
\begin{align*}
&\max\limits_{i\in[N]} \expect{\sup_{0\leq t \leq T} \sum_{k=0}^K \left(\bm{X}_{i, k}^N(t) - \bm{X}_{i, k}(t)\right)^2 } 
\leq \frac{16(4+\lambda)\lambda KT(K+1)}{\sqrt{D_{\min}^N}} \exp\Big( 48(4+\lambda)\lambda(K+1)T \Big).
\end{align*}
\end{theorem}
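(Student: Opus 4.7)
The plan is to bound $\epsilon(T) := \max_{i\in[N]} \mathbb{E}\bigl[\sup_{0 \leq t \leq T} \sum_{k=0}^K (\bm{X}^N_{i,k}(t) - \bm{X}_{i,k}(t))^2\bigr]$ by setting up a \emph{linear} Gronwall inequality whose forcing term carries the crucial factor $1/\sqrt{D^N_{\min}}$ coming from i.i.d.\ concentration within each neighborhood.

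Subtracting the two sample-path representations (which share the same initial conditions and the same Poisson random measure $\mathcal{N}_i$), I would write
\[
\bm{X}^N_{i,k}(t) - \bm{X}_{i,k}(t) \;=\; \int_{[0,K)\times[0,t]} \Delta_{i,k}(s^-,y)\, \mathcal{N}_i(dy,ds),
\]
for a $\{-1,0,+1\}$-valued integrand $\Delta_{i,k}$ that is the difference of four indicator-products. Splitting this Poisson integral into its compensated martingale part and its compensator, applying Doob's $L^2$ inequality and Itô's isometry to the former, and Cauchy-Schwarz (in both $s\in[0,t]$ and $y\in[0,K)$) to the latter, the problem reduces to controlling $\int_0^t \mathbb{E}\bigl[\int_{[0,K)} \Delta_{i,k}(s,y)^2\, dy\bigr]\, ds$.

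A four-case analysis on $(\bm{X}^N_{i,k}(s^-), \bm{X}_{i,k}(s^-)) \in \{0,1\}^2$ yields
\[
\int_{[0,K)} \Delta_{i,k}(s,y)^2\, dy \;\leq\; \Leb\bigl(C^{N,+}_{i,k}(s)\triangle C^{+}_{i,k}(s)\bigr) + \Leb\bigl(C^{N,-}_{i,k}(s)\triangle C^{-}_{i,k}(s)\bigr) + 8K\bigl(\bm{X}^N_{i,k}(s) - \bm{X}_{i,k}(s)\bigr)^2,
\]
with the last term accounting for time instants when the two coupled processes already disagree. Inspecting the set definitions and using $|ab - a'b'| \leq |a-a'| + |b-b'|$ for $|a|,|a'|,|b|,|b'|\leq 1$, each symmetric-difference measure splits into (i) a local coupling slack bounded by a constant multiple of $|\bm{X}^N_{i,0}(s) - \bm{X}_{i,0}(s)|$, and (ii) a constant multiple of the \emph{mean-field error}
\[
E_{i,k}(s) \;:=\; \bigl| \tfrac{1}{|V_i^N|}\!\sum_{\ell\in V_i^N} \indc{\bm{X}^N_{\ell,k}(s)=1} - \pi_k(s)\bigr|, \qquad \pi_k(s) := \mathbb{P}(\bm{X}_{1,k}(s) = 1),
\]
where $\pi_k(s)$ is well-defined independently of $i$ because the limiting coordinate processes $\{\bm{X}_i(\cdot)\}_{i\in[N]}$ are mutually i.i.d.---they are driven by independent Poisson measures, start from i.i.d.\ initial conditions, and evolve through the $i$-independent intensity functions $\pi_k(\cdot)$.

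Finally, I would control $\mathbb{E}[E_{i,k}(s)]$ \emph{in $L^1$} by inserting the limiting process and splitting through the triangle inequality: the coupling-type average is at most $\epsilon(s)$, while the remaining centered empirical mean of $|V_i^N|$ i.i.d.\ Bernoullis has $L^1$ norm at most $\sqrt{\pi_k(s)(1-\pi_k(s))/|V_i^N|} \leq 1/(2\sqrt{D^N_{\min}})$. Summing over $k=0,\dots,K$, maximizing over $i\in[N]$, and combining Doob/Itô ($\sim 4\lambda$) with the compensator contribution ($\sim \lambda^2 K T$) yields an inequality of the schematic form $\epsilon(t) \leq c_1(4+\lambda)\lambda(K+1)\int_0^t \epsilon(s)\, ds + c_2(4+\lambda)\lambda K(K+1)T/\sqrt{D^N_{\min}}$, after which Gronwall's lemma produces the claimed bound. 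The main obstacle is the careful bookkeeping in the case analysis for $\Delta_{i,k}^2$ and---most importantly---the observation that $E_{i,k}$ must be controlled \emph{in $L^1$ via the triangle inequality} rather than via Jensen/Cauchy-Schwarz from an $L^2$ bound, for otherwise the concentration contribution enters Gronwall multiplicatively with $\sqrt{\epsilon(s)}$ and one loses both the linear structure and the promised $1/\sqrt{D^N_{\min}}$ rate.
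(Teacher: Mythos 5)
Your proposal is correct and follows essentially the same route as the paper's proof: the same martingale/compensator decomposition of the shared-Poisson-measure difference with Doob's inequality and the isometry, the same insertion of an intermediate (neighborhood-averaged, limiting-process) quantity to split the set discrepancies into a coupling error plus an i.i.d.\ concentration term of order $1/\sqrt{|V_i^N|}$ controlled in $L^1$ via the $L^2$ norm, and the same linear Gronwall closure. The only difference is cosmetic bookkeeping (a single four-case integrand versus the paper's fourfold $(a+b)^2\leq 2a^2+2b^2$ splitting), and your remark that the concentration term must enter additively in $L^1$ to preserve the linear Gronwall structure is exactly the point on which the paper's argument also turns.
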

The proof is deferred to Appendix \ref{app:localConvergence}.\\

\begin{remark}
Theorem \ref{thm:localConvergence} states that, for $D^N_{min}$ large enough, the coordinate processes $\{\bm{X}^N_i(\cdot)\}_{i\in[N]}$ can be approximated by the i.i.d. coordinate processes $\{\bm{X}_i(\cdot)\}_{i\in[N]}$. In particular, this implies that the original coordinate processes are asymptotically independent.
\end{remark}

Using this local approximation result and i.i.d. processes, we establish a convergence result on $Z_k^N(t)$.
For any given $N$, and for $k=0,\dots,K$, let
\begin{equation*}
  Y^N_k(t) := \frac{1}{N} Z_k^N(t) = \frac{1}{N} \sum\limits_{i=1}^N \bm{X}_{i,k}^N (t)
\end{equation*}
be the fraction of agents that prefer arm $a_k$ at time $t$. \footnote{Recall that we say an agent prefers arm $a_k$ at time $t$ is its memory state is $M=k$ at time $t$. } We will show that over any fixed finite time horizon $[0, T]$, $Y^N(\cdot)=\big(Y_0^N(\cdot),\dots,Y_K^N(\cdot)\big)$ converges uniformly to the unique\footnote{The uniqueness follows from the fact that the drift is a simple polynomial.} solution $y(\cdot)$ of the following ODEs.
\begin{align}
\label{limit 0}
\dot{y_0}(t) &= - y_0(t) \lambda \frac{\mu}{K}\sum_{j=1}^K p_j - y_0(t) \lambda \sum_{j=1}^K (1-\mu) p_j y_j(t),\\
\nonumber
\dot{y_k}(t) &=  y_0(t) \lambda  \frac{\mu}{K} p_k + y_k(t) \lambda \pth{(1-\mu)p_k y_0(t)+ \sum_{j=1}^K (p_k-p_{j})y_{j}(t)},\\
& \qquad \qquad \qquad \qquad \qquad \qquad \qquad \qquad  \qquad  \forall\,  k\ge 1.
\label{limit 1}
\end{align}
This convergence result is formalized in the following theorem.

\begin{theorem} \label{thm:transient}
Fix some time $T>0$. Suppose that the initial conditions $\{\bm{X}_{i}^N (0)\}_{i\in[N]}$ are i.i.d. with respect to a fixed distribution $q=(q_0,\dots,q_K)$, such that $\mathbb{P}(\bm{X}_{i,k}^N (0)=1)=q_k$, for all $k=0,\dots,K$, and for all $i\in[N]$. Furthermore, suppose that
\[ \lim_{N\to\infty} D_{\min}^N = \infty. \]
Then,
\[ \lim\limits_{N\to\infty} \expect{\sup_{0\leq t \leq T} \sum_{k=0}^K \left( Y^N_k(t) - y_k(t)\right)^2 } =0, \]
where $y(t)$ is the solution to the ODEs defined by equations \eqref{limit 0} and \eqref{limit 1} with initial condition $y(0)=q$.
\end{theorem}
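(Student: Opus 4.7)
The plan is to use Theorem~\ref{thm:localConvergence} as a bridge between the empirical average $Y^N(\cdot)$ and the ODE limit $y(\cdot)$, through the empirical average of the auxiliary i.i.d.\ processes $\bm{X}_i(\cdot)$ defined in \eqref{eq:limitProcess}. Define
\[
\tilde y_k^N(t) \;:=\; \frac{1}{N}\sum_{i=1}^N \bm{X}_{i,k}(t), \qquad k=0,\dots,K.
\]
Since the coefficients $C_{i,k}^\pm(t)$ in \eqref{eq:limitProcess} depend on agent $i$ only through $\bm{X}_i$ and through the deterministic marginals $\mathbb{P}(\bm{X}_{i,k}(t)=1)$ (which are common across $i$ by the i.i.d.\ initial conditions), the processes $\{\bm{X}_i(\cdot)\}_{i\in[N]}$ are themselves i.i.d. Applying $(a+b)^2\le 2a^2+2b^2$, the quantity in Theorem~\ref{thm:transient} is bounded by
\begin{align*}
2\,\mathbb{E}\bigg[\sup_{0\le t\le T}\sum_{k=0}^K\big(Y_k^N(t)-\tilde y_k^N(t)\big)^2\bigg] + 2\,\mathbb{E}\bigg[\sup_{0\le t\le T}\sum_{k=0}^K\big(\tilde y_k^N(t)-y_k(t)\big)^2\bigg].
\end{align*}

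\textbf{Coupling error.} For the first summand, an agent-wise Cauchy--Schwarz gives
\[
\sup_{0\le t\le T}\sum_{k=0}^K\big(Y_k^N(t)-\tilde y_k^N(t)\big)^2 \;\le\; \frac{1}{N}\sum_{i=1}^N \sup_{0\le t\le T}\sum_{k=0}^K\big(\bm{X}_{i,k}^N(t)-\bm{X}_{i,k}(t)\big)^2,
\]
so Theorem~\ref{thm:localConvergence} bounds its expectation by
$\frac{16(4+\lambda)\lambda KT(K+1)}{\sqrt{D_{\min}^N}}\exp\!\big(48(4+\lambda)\lambda(K+1)T\big)$,
which vanishes by the hypothesis $D_{\min}^N\to\infty$.

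\textbf{LLN error.} For the second summand, I first identify $y_k(t)=\mathbb{E}[\bm{X}_{i,k}(t)]$: setting $\pi_k(t):=\mathbb{E}[\bm{X}_{i,k}(t)]$, direct computation of $\dot\pi_k$ from the intensities $\lambda|C_{i,k}^\pm(t)|$, followed by simplification using $\sum_j \pi_j\equiv 1$, shows that $\pi$ satisfies \eqref{limit 0}--\eqref{limit 1} with $\pi(0)=q$; uniqueness of the ODE (its RHS is polynomial, hence locally Lipschitz) then yields $\pi=y$. The Doob--Meyer decomposition reads
\[
\bm{X}_{i,k}(t)=\bm{X}_{i,k}(0)+\int_0^t A_{i,k}(s)\,ds + M_{i,k}(t),
\]
where $|A_{i,k}|\le \lambda K$ and $M_{i,k}$ is a square-integrable martingale with $\mathbb{E}[\langle M_{i,k}\rangle_T]\le \lambda KT$. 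Averaging over $i$,
\[
\tilde y_k^N(t)-y_k(t) \;=\; \big[\tilde y_k^N(0)-y_k(0)\big] + \int_0^t \big[\bar A_k^N(s)-\mathbb{E}\,A_{i,k}(s)\big]\,ds + \bar M_k^N(t),
\]
with $\bar A_k^N := \frac{1}{N}\sum_i A_{i,k}$ and $\bar M_k^N := \frac{1}{N}\sum_i M_{i,k}$. Each summand has squared-supremum expectation $O(1/N)$: the initial term by an i.i.d.\ variance computation; the drift integral by the bound $\sup_{t\le T}\bigl(\int_0^t f\bigr)^2\le T\int_0^T f^2$ combined with the pointwise estimate $\mathbb{E}[(\bar A_k^N(s)-\mathbb{E}\,A_{i,k}(s))^2]\le (2\lambda K)^2/N$; and the averaged martingale by Doob's $L^2$-maximal inequality together with $\mathbb{E}[\bar M_k^N(T)^2]\le \lambda KT/N$.

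\textbf{Main obstacle.} Summing the $O(1/N)$ bounds over $k\in\{0,\dots,K\}$ yields an $O(K/N)$ rate for the LLN error, which vanishes as $N\to\infty$ and completes the proof. The bulk of the technical work is done by Theorem~\ref{thm:localConvergence}; given it, the most delicate remaining step is the identification $y_k=\pi_k$, which requires carefully separating the inflow/outflow contributions from states $0$, $k$, and $j\neq 0,k$ in the expression for $\dot\pi_k$ and reassembling via $\sum_j \pi_j=1$. Everything else reduces to standard martingale arguments for averages of bounded i.i.d.\ jump processes on a finite horizon.
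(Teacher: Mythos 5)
Your proposal is correct and follows essentially the same route as the paper: decompose the error through the empirical average of the auxiliary i.i.d.\ processes $\bm{X}_i(\cdot)$, control the coupling term via Theorem~\ref{thm:localConvergence}, handle the remaining term by a law of large numbers for i.i.d.\ processes, and identify $\mathbb{E}[\bm{X}_{1,k}(t)]$ with the ODE solution by computing the drift from the compensated Poisson representation. The only notable difference is that your LLN step is fully quantitative (martingale decomposition plus Doob's inequality giving an explicit $O(K/N)$ rate), whereas the paper argues qualitatively via dominated convergence and a citation to standard functional LLN arguments; your version is slightly stronger but proves the same statement.
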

The proof is deferred to Appendix \ref{app: detailed proof of mean-field}.\\

This theorem implies that, for $N$ large enough, the process $Y^N(t)$ closely tracks a deterministic and smooth trajectory. We will illustrate this via a simulation result at the end of this subsection. 
Such approximations are desired because the analysis of an ODEs system is relatively easier than that of the original stochastic system. Indeed, in a great variety of fields, such as biology, epidemic theory, physics, and chemistry \cite{kurtz1970solutions}, differential equations are used directly to model the {\em macroscopic} level system dynamics that are {\em arguably} caused by the {\em microscopic} level agents interactions in the system.

Note that the above ODE system is similar to the antisymmetric Lotka-Volterra equation \cite{knebel2015evolutionary} with $\mu=0$. The Lotka-Volterra equation is a typical replicator dynamics, where if $y_k(0)=0$ for some $k$, it remains to be zero throughout the entire process. In contrast, even when $y_1(0)=0$, the solution of our ODE system converges exponentially fast to $\pth{0,1,0,\dots,0}$ in time as long as $\mu>0$ and $y_0(0)>0$. 
%

\begin{theorem}\label{thm:expConvergence}
  Let $y$ be the solution of the ODEs in \eqref{limit 0} and \eqref{limit 1}. We have the following.
  \begin{itemize}
    \item Suppose $y_1(0)>0$. Then, we have
    \[ y_1(t)\geq 1- \frac{1}{\frac{y_1(0)}{1-y_1(0)} \exp \pth{R t } +1}, \qquad \forall \,\, t\geq 0, \]
    where
  \[ R=\lambda \times \min\left\{\left(1-\mu+\frac{\mu}{K}\right)p_1, ~~ p_1-p_2 \right\}. \]
    \item Suppose $y_1(0)=0$ and $y_0(0)>0$. Then, for every $c\in(0,1)$, we have
    \[ y_1(t)\geq 1- \frac{1}{\frac{(1-c)y_0(0)}{K-(1-c)y_0(0)} \exp \pth{R (t-t_c) } +1}, \qquad \forall \,\, t\geq \bar{t}_c, \]
    where
    \[ \bar{t}_c \triangleq \frac{\log \frac{1}{c}}{\lambda \frac{\mu}{K}\sum_{i=1}^K p_i}. \]
  \end{itemize}
\end{theorem}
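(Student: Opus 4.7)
My plan is to prove both parts by reducing them to the single differential inequality
\[\dot{y_1}(t) \geq R\, y_1(t)\bigl(1-y_1(t)\bigr).\]
Once this is in hand, setting $\phi(t) := y_1(t)/(1-y_1(t))$ gives $\dot{\phi} = \dot{y_1}/(1-y_1)^2 \geq R\phi$, so Grönwall yields $\phi(t)\geq \phi(0) e^{Rt}$, which rearranges directly to the logistic bound stated in Part 1. Part 2 will then follow by applying this bound shifted to start at $\bar{t}_c$ with initial value $y_1(\bar{t}_c)$, together with the monotonicity of $y\mapsto y/(1-y)$, provided I can first establish $y_1(\bar{t}_c)\geq (1-c) y_0(0)/K$.

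To get the differential inequality in Part 1, I would start from the expanded ODE
\[\dot{y_1} = \lambda\frac{\mu p_1}{K} y_0 + \lambda(1-\mu) p_1 y_0 y_1 + \lambda y_1\sum_{j=2}^K (p_1-p_j) y_j,\]
in which every term is nonnegative because $p_1\geq p_j$ for $j\geq 2$. Using the decomposition $y_1(1-y_1) = y_0 y_1 + y_1\sum_{j\geq 2} y_j$, I match the two summands separately. The peer-recommendation piece satisfies $\lambda y_1\sum_{j\geq 2}(p_1-p_j) y_j \geq \lambda(p_1-p_2) y_1\sum_{j\geq 2} y_j \geq R\, y_1\sum_{j\geq 2} y_j$. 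The ``from-$y_0$'' piece reduces to checking $\lambda\mu p_1/K + \lambda(1-\mu) p_1 y_1 \geq R y_1$, i.e.\ $\lambda\mu p_1/K \geq \bigl(R - \lambda(1-\mu) p_1\bigr) y_1$; the definition $R\leq \lambda(1-\mu+\mu/K) p_1$ ensures the right-hand side is at most $\lambda\mu p_1 y_1/K \leq \lambda\mu p_1/K$, which closes the argument.

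For Part 2, it remains to show $y_1(\bar{t}_c)\geq (1-c) y_0(0)/K$. First, dropping the nonnegative $(1-\mu)\sum_j p_j y_j$ term in the ODE for $y_0$ gives $\dot{y_0} \leq -\lambda\frac{\mu}{K} y_0 \sum_j p_j$, so Grönwall and the choice of $\bar{t}_c$ yield $y_0(\bar{t}_c)\leq c\, y_0(0)$. Second, under the natural initialization where no agent initially has a preference ($y_k(0)=0$ for all $k\geq 1$), I would show by a comparison argument that $y_1(t)\geq y_k(t)$ for every $k\geq 2$ and all $t\geq 0$. The key step is a direct calculation that whenever $y_1 = y_k$, using $\sum_{j=1}^K y_j = 1 - y_0$ to collapse the cross terms $[(p_1-p_j)-(p_k-p_j)] y_j = (p_1-p_k) y_j$, one obtains
\[\dot{y_1} - \dot{y_k} = \lambda(p_1-p_k)\Bigl[\frac{\mu y_0}{K} + y_1(1-\mu y_0)\Bigr]\geq 0.\]
Combining the two ingredients, $K y_1(\bar{t}_c)\geq \sum_{k\geq 1} y_k(\bar{t}_c) = 1 - y_0(\bar{t}_c) \geq 1 - c\, y_0(0) \geq (1-c) y_0(0)$, as desired.

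The main obstacle I anticipate is the comparison step in Part 2: verifying $\dot{y_1}-\dot{y_k}\geq 0$ at crossings requires carefully balancing the uniform-sampling advantage $\lambda\mu(p_1-p_k)/K$ against the peer-recommendation drift $\lambda(p_1-p_k)\sum_j y_j$, and globally deducing $y_1\geq y_k$ from this pointwise condition relies on a standard continuity argument for the ODE flow. The remaining pieces (Part 1's differential inequality, the Grönwall upper bound on $y_0$, and the final assembly in Part 2) are then routine once the right algebraic decompositions are identified.
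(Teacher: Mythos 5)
Your Part 1 is correct and is essentially the paper's argument: you derive the same differential inequality $\dot{y_1}(t)\geq R\,y_1(t)(1-y_1(t))$ from \eqref{limit 1} (your coefficient-matching on the decomposition $y_1(1-y_1)=y_0y_1+y_1\sum_{j\geq 2}y_j$ is just a re-bookkeeping of the paper's chain of bounds, which instead uses $y_0\geq y_0y_1$), and your substitution $\phi=y_1/(1-y_1)$ with Gr\"onwall is an explicit integration of the paper's comparison with the logistic ODE $\dot{z_1}=R(1-z_1)z_1$ — the two give identical conclusions. The Gr\"onwall bound $y_0(\bar t_c)\leq c\,y_0(0)$ and the final assembly of Part 2 (restart the logistic bound at $\bar t_c$, use monotonicity of $y\mapsto y/(1-y)$) also match the paper. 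Your crossing computation $\dot{y_1}-\dot{y_k}=\lambda(p_1-p_k)\bigl[\tfrac{\mu}{K}y_0+y_1(1-\mu y_0)\bigr]$ at $y_1=y_k$ is algebraically correct, and the "standard continuity argument" you worry about can be made fully rigorous: writing $d=y_1-y_k$ one gets $\dot d\geq \lambda A_k d$ with $A_k=(1-\mu)p_ky_0+\sum_j(p_k-p_j)y_j$, so an integrating factor gives $d(t)\geq 0$ from $d(0)\geq 0$ without any strictness issue.

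The genuine gap is in the hypothesis under which you establish $y_1(\bar t_c)\geq (1-c)y_0(0)/K$. The theorem's second bullet assumes only $y_1(0)=0$ and $y_0(0)>0$; it does \emph{not} assume $y_k(0)=0$ for $k\geq 2$. Your comparison $y_1(t)\geq y_k(t)$ requires the initial ordering $y_1(0)\geq y_k(0)$, which together with $y_1(0)=0$ forces $y_k(0)=0$ for all $k\geq 1$, i.e.\ $y_0(0)=1$. So your proof of Part 2 covers only the all-unopinionated initialization and silently strengthens the hypothesis. For an initial condition such as $y_0(0)=1/2$, $y_2(0)=1/2$, $y_1(0)=0$, the ordering $y_1\geq y_2$ fails at $t=0$ and your chain $Ky_1(\bar t_c)\geq\sum_{k\geq1}y_k(\bar t_c)=1-y_0(\bar t_c)$ is unavailable; note also that the naive alternative $y_1(t)\geq\tfrac1K\bigl(y_0(0)-y_0(t)\bigr)$ is false in general, since the peer-recommendation inflow from $y_0$ is tilted toward whichever arm already has mass. (To be fair, the paper itself dispatches this step with a single sentence, ``By \eqref{limit 1}, we know that $y_1(\bar t_c)\geq y_0(0)(1-c)/K$,'' so your comparison lemma is a real improvement in rigor for the $y_0(0)=1$ case — but as written it does not prove the statement at the stated level of generality, and you should either supply an argument for arbitrary $y_k(0)$, $k\geq 2$, or flag the restriction explicitly.)
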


The proof of the second part involves showing that at time $\bar{t}_c$, we have $y_1(\bar{t}_c)\geq \frac{(1-c)y_0(0)}{K}$, and then obtaining a suitable lower bound that holds from that point going forward. The proof of the first part is a special case with $\bar{t}_c=0$. The proof is in Appendix \ref{app:proofExpConvergence}.\\

An immediate corollary of the previous theorem is the asymptotic convergence of the trajectories to the desired state.

\begin{corollary}\label{cor:convergence}
Let $y$ be the solution of the ODEs in \eqref{limit 0} and \eqref{limit 1} with $y_0(0)+y_1(0)>0$. We have that
 \[ \lim\limits_{t\to\infty} y(t) = \pth{0,1,0,\dots,0}. \]
\end{corollary}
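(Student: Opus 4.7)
The plan is to deduce the corollary directly from Theorem~\ref{thm:expConvergence}, once I verify that $y(t)$ stays in the probability simplex. My first step is to check that $\sum_{k=0}^K y_k(t) = 1$ and $y_k(t) \geq 0$ for all $t \geq 0$, under the initial condition $y(0) = q$. Summing \eqref{limit 0} and \eqref{limit 1}, the ``uniform sampling'' contributions $\pm y_0 \lambda \tfrac{\mu}{K} p_k$ cancel between $\dot{y_0}$ and $\sum_{k \geq 1} \dot{y_k}$, the ``peer recommendation'' contributions $\pm y_0 \lambda (1-\mu) p_k y_k$ cancel as well, and the cross term
\[
\lambda \sum_{k,j=1}^K y_k y_j (p_k - p_j)
\]
vanishes by antisymmetry in $(k,j)$. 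Hence $\sum_k y_k(t) \equiv 1$. For non-negativity, I would invoke the standard tangential argument: on $\{y_0 = 0\}$ the equation for $\dot{y_0}$ has the factor $y_0$, so $y_0$ cannot cross zero from above; and on $\{y_k = 0\}$ for $k \geq 1$, $\dot{y_k} = y_0 \lambda \tfrac{\mu}{K} p_k \geq 0$, so $y_k$ cannot become negative either.

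The second step is a case split matching Theorem~\ref{thm:expConvergence}. If $y_1(0) > 0$, the first bound of that theorem yields
\[
y_1(t) \ \geq \ 1 - \frac{1}{\frac{y_1(0)}{1 - y_1(0)} \exp(R t) + 1},
\]
and since $p_1 > p_2$ and $\mu \in [0,1]$ imply $R = \lambda \min\{(1 - \mu + \mu/K) p_1,\, p_1 - p_2\} > 0$, the right-hand side tends to $1$ as $t \to \infty$. If instead $y_1(0) = 0$ and $y_0(0) > 0$, I pick any $c \in (0,1)$ and apply the second bound, which gives the same asymptotic conclusion with $\bar{t}_c$ in place of $0$. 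Thus in both cases $\lim_{t \to \infty} y_1(t) = 1$. Combined with $y_k(t) \geq 0$ and $\sum_{k'=0}^K y_{k'}(t) = 1$, this forces $y_k(t) \to 0$ for every $k \neq 1$, giving $\lim_{t \to \infty} y(t) = (0,1,0,\dots,0)$.

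I do not expect any serious obstacle: the heavy lifting is already contained in Theorem~\ref{thm:expConvergence}, and the only additional ingredient is the routine verification that $y(\cdot)$ remains a probability vector. The case split $y_1(0) > 0$ versus $y_1(0) = 0$ is dictated by the two regimes treated in that theorem, and the hypothesis $y_0(0) + y_1(0) > 0$ is exactly what is needed to ensure at least one of those two regimes applies.
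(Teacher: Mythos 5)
Your proof is correct and follows exactly the route the paper intends: the corollary is stated as an immediate consequence of Theorem \ref{thm:expConvergence}, and your case split on $y_1(0)>0$ versus $y_1(0)=0,\ y_0(0)>0$ together with $R>0$ is precisely that deduction. The additional verification that $y(\cdot)$ remains a probability vector (conservation of $\sum_k y_k$ and forward invariance of nonnegativity) is a routine but welcome step that the paper leaves implicit.
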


The results in Theorems \ref{thm:transient}, \ref{thm:expConvergence}, and Corollary \ref{cor:convergence} are illustrated in Figure \ref{example}, where some typical sample paths are drawn.
\begin{figure}
\centering
\includegraphics[width=15cm]{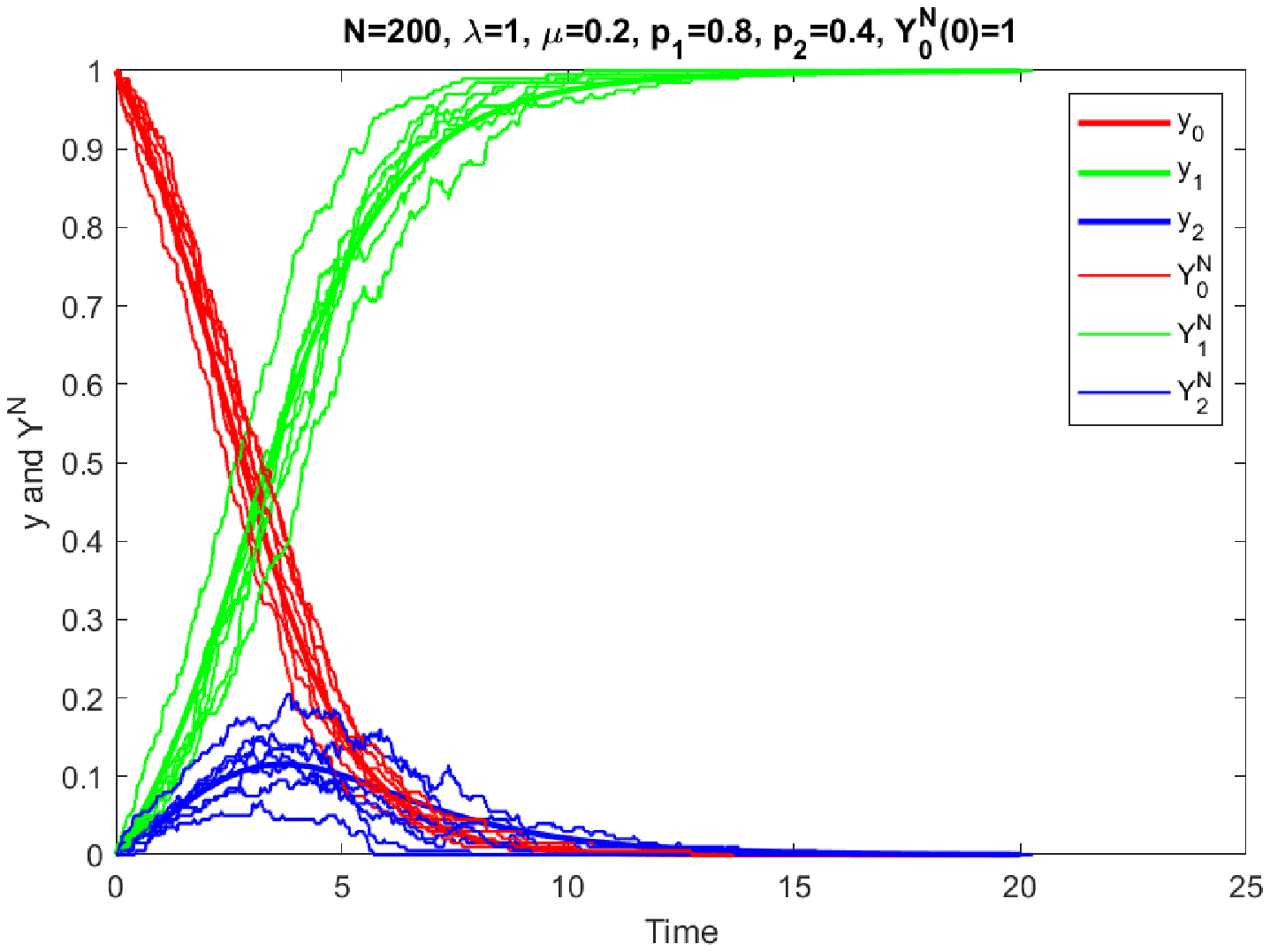}
\caption{} 
\label{example}
\end{figure}
Here, we choose  $N=200$, $K=2$, $\lambda=1$, $\mu=0.2$, $p_1=0.8$, $p_2=0.4$, and initial conditions $Y^N=(1, 0, \cdots, 0)$ (i.e., every agent starts with no preference over the arms). 
In Figure \ref{example}, each of the component in $Y^N$ goes to their corresponding equilibrium states exponentially fast. In particular, these typical sample paths have two slightly different behaviors stages: At the first stage, $Y^N_1(t)$ and $Y^N_2(t)$ both increase up to the point where $Y^N_1(t)+Y^N_2(t) \approx 1$ -- noting that $Y^N_2(t)$ grows much slower than $Y^N_1(t)$. At the second stage, until entering their equilibrium states, $Y^N_1(t)$ is increasing and $Y^N_2(t)$ is decreasing. More importantly, $Y_0^N, Y_1^N,$ and $Y_2^N$ track their corresponding deterministic and smooth trajectories, which converge to the desired $(0, 1, \cdots, 0)$ in time.

\subsection{Learnability vs transient and interchange of limits} \label{sec:learnabilityVsTransient}

Recall that, for the learnability results in subsection \ref{sec:learnability}, we need the graph $G_N$ to be connected and either doubly-stochastic or regular, and we need the initial conditions be such that $Z_1^N(0)+Z_1^N(0)>cN$ (holds either deterministically or with high probability) for some positive constant $c>0$.
Under these conditions, we have
\[ \lim\limits_{N\to\infty} \lim\limits_{t\to\infty} Y^N_1(t) = \lim\limits_{N\to\infty} \mathds{1}_{E_N} = 1. \]
Moreover, for the transient approximation result and the convergence to the best state presented in subsection \ref{sec:transient}, we need the graph $G_N$ to have diverging minimum degree, and i.i.d. initial conditions with $q_0+q_1>0$. Under these conditions, we have
\[ \lim\limits_{t\to\infty} \lim\limits_{N\to\infty} Y^N_1(t) = \lim\limits_{t\to\infty} y_1(t) = 1. \]
When we have all the assumptions at the same time, we can interchange the order of these limits. This is depicted in the commutative diagram of Figure \ref{fig:limits_diagram}.\\

\begin{figure}[ht!]
  \centering
  \begin{tikzpicture}[scale=1]
    \draw (0,0) node {$\mathds{1}_{E_N}$};
    \draw [<-,thick] (0,0.4) -- (0,2.6);
    \draw (1.5,0.3) node {Thm. \ref{thm: eventual learn}};
    \draw (1.5,-0.2) node {$N\to\infty$};
    \draw [->,thick] (0.4,0) -- (2.6,0);
    \draw (-0.8,1.7) node {Eq. \eqref{event: success}};
    \draw (-0.8,1.3) node {$t\to\infty$};
    \draw (0,3) node {$Y^N_1(t)$};
    \draw [->,thick] (0.4,3) -- (2.6,3);
    \draw (3.75,1.7) node {Cor. \ref{cor:convergence}};
    \draw (3.75,1.3) node {$t\to\infty$};
    \draw (3,3) node {$y_1(t)$};
    \draw [->,thick] (3,2.6) -- (3,0.4);
    \draw (1.5,3.3) node {Thm. \ref{thm:transient}};
    \draw (1.5,2.8) node {$N\to\infty$};
    \draw (3,0) node {$1$};
  \end{tikzpicture}
  \caption{Interchange of limits.}
  \label{fig:limits_diagram}
\end{figure}

On the other hand, the differences in the assumptions are not an artifact of our analysis, but they stem from fundamental differences in the dynamics. This is shown in the following examples.

\begin{example}
  Let $G_N$ be the circular graph with $N$ agents. Suppose that the initial condition is such that there are $N/2$ agents that prefer arm $a_1$ and $N/2$ agents that prefer arm $a_2$. Then, Theorem \ref{thm: eventual learn} states that the probability that all agents eventually learn the best arm converges to $1$ exponentially fast in $N$. However, if the initial condition is such that all agents prefer arm $a_1$ are all next to each other, then there are only 4 agents that have a neighbor with a different preferred arm than theirs (the two pair of agents in the boundary of preferred arms). It can be checked that $Z_1^N(\cdot)$ is a birth-death process with birth rate $2\lambda p_1$ and death rate $2\lambda p_2$. As a result, the approximation result given by Theorem \ref{thm:transient} does not hold. Not only the transient is now linear instead of exponential, but the speed of convergence to the best arm is $N$ times slower.
\end{example}

\begin{example}
  Let $G_N$ be the graph consisting of $N/\log\log(N)$ connected components consisting of complete graphs with $\log\log(N)$ agents. If the initial conditions are i.i.d. with $q_0+q_1>0$, then Theorem \ref{thm:transient} holds and we have the usual exponential convergence to the desired state. However, if $q_i>0$ for some $i>1$, it can be checked that all agents in at least one connected component will prefer arm $a_i$, with high probability. Since those agents will prefer arm $a_i$ for all time, not all agents will eventually learn the best arm, with high probability.
\end{example}

\section{Concluding Remarks}
\label{subsec: comparison}
We studied the collaborative multi-armed bandit problems in social groups wherein each agent suffers finite memory constraint \cite{1054427}.
In contrast to isolated agents \cite{1054427} for whom learning the best option is impossible, we showed that with the aid of social persuasion even if the communication graphs can be highly sparse, the probability of collaboratively learning the best option goes to 1 exponentially fast in $N$.
We also characterized the transient system behaviors. In particular, we showed that if the minimum degree of the graph diverges as $N \diverge $, sample paths describing the opinion evolutions of the individuals are asymptotically independent.  Additionally, over an arbitrary but given finite time horizon, the proportions of population with different opinions converge to the unique solution of a system of ODEs that are invariant to the structures of the communication graphs. In the solution of the obtained ODEs, the proportion of the population holding the correct opinion converges to $1$ exponentially fast in time. The key challenge in the analysis of general graphs is that due to the lack of exchangeability, one needs to keep track of the opinion evolution of each individual. This complicates the state description of the system. %



\section*{Acknowledgements}
We would like to thank John N.\ Tsitsiklis and Debankur Mukherjee for valuable discussions and comments.

\bibliographystyle{acm}
\bibliography{alpha}

\appendix

\section{Proof of Theorem \ref{thm: eventual learn}}
\label{app: limit}

The main idea in proving Theorem \ref{thm: eventual learn} is to couple the process $Z^N_1(\cdot)$  with a standard biased random walk whose probability of hitting $N$ (i.e., success probability) is well understood.

Despite the fact that the $N\times (K+1)$-dimensional Markov chain $\bm{X}^N(\cdot)$ is hard to directly analyze, the evolution of the number of agents with the correct opinion (i.e., preferring the best arm $a_1$)
$Z^N_1(\cdot)$ has the following nice property.

\begin{lemma}
\label{lm: random walker}
Suppose that $G_N$ is connected and doubly-stochastic. If there is a jump in $Z^N_1(\cdot)$ at time $t$, the probability of moving upwards is lower bounded as
\begin{align*}
&\mathbb{P} \Big( Z^N_1(t) = Z^N_1(t^-) +1 \,\,\Big|\,\, Z^N_1(t) \not= Z^N_1(t^-) \Big) 
\geq \frac{p_1(1-\mu) + \frac{\mu}{K} p_1}{p_1(1-\mu) + \frac{\mu}{K} p_1 + p_2}.
\end{align*}
In addition, if $G_N$ is regular, then
\begin{align*}
\mathbb{P} \Big( Z^N_1(t) = Z^N_1(t^-) +1 \,\,\Big|\,\, Z^N_1(t) \not= Z^N_1(t^-) \Big)  \geq \frac{p_1}{p_1+p_2}.
\end{align*}
\end{lemma}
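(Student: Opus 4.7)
The plan is to compute the instantaneous upward and downward jump rates of $Z^N_1(\cdot)$ at an arbitrary state and bound their ratio pointwise, since the desired conditional probability equals $U/(U+D)$ where $U$ and $D$ are these total rates. Write $N_0$ for the set of agents with no preference, $N_1$ for those preferring $a_1$, and $N_k$ ($k\ge 2$) for those preferring $a_k$, and let $w_{i\ell}=1/|V_i^N|$ denote the peer-recommendation weights. Reading off the transition rates from Section~3,
\begin{align*}
U &= \lambda\tfrac{\mu p_1}{K}|N_0| + \lambda p_1(1-\mu)A + \lambda p_1 B,\\
D &= \lambda\sum_{k=2}^K p_k \!\!\sum_{i\in N_1,\,\ell\in V_i^N\cap N_k}\!\! w_{i\ell} \;\le\; \lambda p_2\, C,
\end{align*}
where $A=\sum_{i\in N_0,\,\ell\in V_i^N\cap N_1}w_{i\ell}$, $B=\sum_{i\in N_0^c\cap N_1^c,\,\ell\in V_i^N\cap N_1}w_{i\ell}$, and $C=\sum_{i\in N_1,\,\ell\in V_i^N\cap (N_0^c\cap N_1^c)}w_{i\ell}$.

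For the regular case with common degree $d$ all weights equal $1/d$, so by symmetry of the undirected edge set both $B$ and $C$ reduce to $|E(N_1,\,N_0^c\cap N_1^c)|/d$, and hence $B=C$. Retaining only the third term of $U$ already gives $U/D\ge p_1/p_2$, equivalently $U/(U+D)\ge p_1/(p_1+p_2)$ (if $B=0$ then $D=0$ and the bound is trivial).

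The doubly-stochastic case is the main obstacle, since now $w_{i\ell}$ and $w_{\ell i}$ generally differ and ``flow in'' and ``flow out'' at $N_1$ are no longer edge-by-edge equal. The remedy is a global flow-balance identity: the column-sum condition $\sum_{i\in V_\ell^N}w_{i\ell}=1$ of double-stochasticity gives
\[ |N_1| \;=\; \sum_{\ell\in N_1}\sum_{i\in V_\ell^N}w_{i\ell} \;=\; A+I+B, \]
where $I$ is the sum of $w_{i\ell}$ over directed edges with both endpoints in $N_1$; the row-sum condition analogously yields $|N_1|=w(N_1\to N_0)+I+C$, and subtracting shows $C=A+B-w(N_1\to N_0)\le A+B$. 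Using finally $|N_0|\ge A$ (from row sums) and $1-\mu+\mu/K\le 1$,
\[ U\;\ge\;\lambda p_1\bigl(1-\mu+\tfrac{\mu}{K}\bigr)(A+B), \qquad D\le \lambda p_2(A+B), \]
which yields the stated bound (trivially when $A+B=0$). The only step that genuinely uses double-stochasticity is the inequality $C\le A+B$; the rest is bookkeeping of the rates.
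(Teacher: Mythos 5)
Your proof is correct and follows essentially the same route as the paper's: the rate expressions for $U$ and $D$ are identical to the paper's equations for the upward and downward drift, and your key inequality $C\le A+B$ is exactly the content of the paper's Proposition~\ref{prop: symmetry} (flow out of $N_1$ equals flow into $N_1$ under double stochasticity), combined with the same observation $|N_0|\ge A$. The only difference is organizational: you apply the balance identity uniformly, which lets you skip the paper's three-way case analysis (no $\calS$--$\calB$ links; flow $\calB\to\calS$ dominating; and the reverse).
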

We prove this lemma in Appendix \ref{app: random walker}.\\

Let
\[ \{T_l\}_{l\geq 1} \triangleq \big\{t\geq 0: Z^N_1(t)\neq Z^N_1(t^-) \big\} \]
be the set of random times when there is a jump in the process $Z^N_1(\cdot)$. Without loss of generality, we assume that $T_1\leq T_2 \leq \cdots$. Note that the set $\{T_l\}_{l\geq 1}$ might be finite since $\bm{X}^N(\cdot)$ might be absorbed after a finite number of jumps. For notational convenience, denote the random walk
\[
W(l)~ \triangleq ~ Z^N_1(T_l).
\]
Clearly, $W(\cdot)$ is a random walk that represents the evolution of the number of agents with the correct opinion (i.e., preferring the best arm $a_1$). 
Thus, $E^N$ can be rewritten as
$$E^N = \sth{\lim_{l\to\infty} W(l) = N}.$$
It follows from Lemma \ref{lm: random walker} that
\begin{align*}
&\mathbb{P}\left(W(l+1) = W(l)+1 \,\,\left|\,\, \text{the $(l+1)$-st jump of $Z^N_1(\cdot)$ exists} \right.\right) \\
& 
\geq
\begin{cases}
\frac{p_1}{p_1+p_2}, & ~~ \text{if $G_N$ is regular}, \\
\frac{p_1(1-\mu) + \frac{\mu}{K} p_1}{p_1(1-\mu) + \frac{\mu}{K} p_1 + p_2}, & ~~ \text{if $G_N$ is doubly-stochastic}.
\end{cases}
\end{align*}
When $G_N$ is regular, the random walk $W(\cdot)$ is biased towards moving upwards.
When $G_N$ is only doubly-stochastic, 
we need to have $\mu\leq 1-p_2/p_1$ in order to guarantee that $W(\cdot)$ is also biased towards moving upwards.

Consider the standard random walk $\pth{\widehat{W}(l): \, l\in \integers_+}$ such that if $\widehat{W}(l) =0$ or $\widehat{W}(l) =N$, then $\widehat{W}(l+1) = \widehat{W}(l)$; otherwise,
\begin{align}
\label{aux: random walk}
\widehat{W}(l+1) =
\begin{cases}
\widehat{W}(l)+1 ~ & \text{with probability  } p^*; \\
\widehat{W}(l)-1  ~ & \text{with probability  } 1-p^*,
\end{cases}
\end{align}
where
\begin{align*}
p^* =
\begin{cases}
\frac{p_1}{ p_1+p_2}, & \text{if $G_N$ is regular}; \\
\frac{p_1(1-\mu) + \frac{\mu}{K} p_1}{p_1(1-\mu) + \frac{\mu}{K} p_1 + p_2}, & \text{if $G_N$ is doubly-stochastic}.
\end{cases}
\end{align*}
From \cite[Chapter 4.2]{hajek2015random}, we know that for any $z_0\in \integers_+$,
%
\[
\mathbb{P}\left(\left. \lim_{l\diverge} \widehat{W}(l) =N \,\,\right|\,\, \widehat{W}(0)=z_0 \right) 
   \geq 1- \pth{\frac{1-p^*}{p^*}}^{- z_0}.
\]

Intuitively, the original random walk $W(l)$ has a higher tendency to move one step up (if possible) than that of the standard random walk \eqref{aux: random walk}. Thus, starting at the same position, the original random walk has a higher chance to be absorbed at position $N$ than that of the standard random walk. Thus,
\begin{align*}
\mathbb{P}\left(\left. \lim_{l\diverge} W(l) =N \,\,\right|\,\, W(0)=z_0 \right) &\geq \mathbb{P}\left(\left. \lim_{l\diverge} \widehat{W}(l) =N \,\,\right|\,\, \widehat{W}(0)=z_0 \right)  \\
&\geq  1- \pth{\frac{1-p^*}{p^*}}^{- z_0}. 
\end{align*}
We conclude Theorem \ref{thm: eventual learn} by choosing $z_0 \ge c_0N$. 

\subsection{Proof of Lemma \ref{lm: random walker}}
\label{app: random walker}

The proof of Lemma \ref{lm: random walker} uses the following proposition.
\begin{proposition}
\label{prop: symmetry}
Suppose that $G_N$ is doubly-stochastic. Then for any subset $\calS$ of $[N]$, it holds that
\begin{align*}
\sum_{i\in \calS} \frac{1}{|V_i^N|} \sum_{j\in V_i^N} \indc{j\notin \calS} = \sum_{i\notin \calS} \frac{1}{|V_i^N|} \sum_{j\in V_i^N} \indc{j\in \calS}.
\end{align*}
\end{proposition}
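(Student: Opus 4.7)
The plan is to exploit the double-stochasticity of $(\bm D^N)^{-1}\bm A^N$ by computing the total weighted ``flow'' into $\calS$ in two ways: once using the row sums and once using the column sums, and then taking a difference.

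Concretely, I would first record the two forms of the doubly-stochastic hypothesis. The row-sum form, $\sum_{j\in V_i^N}\frac{1}{|V_i^N|}=1$, is automatic. The column-sum form, $\sum_{i:\,j\in V_i^N}\frac{1}{|V_i^N|}=1$, is equivalent to the doubly-stochastic assumption stated in \prettyref{sec: model} (using symmetry of the adjacency matrix for the undirected graph, the column-sum condition and the defining equation $\sum_{j\in V_i^N}\frac{1}{|V_j^N|}=1$ are the same statement).

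Next, I would compute $\sum_{i\in[N]}\frac{1}{|V_i^N|}\sum_{j\in V_i^N}\indc{j\in\calS}=|\calS|$ in two ways. Summing the row-sum identity over $i\in\calS$ gives
\[
\sum_{i\in\calS}\frac{1}{|V_i^N|}\sum_{j\in V_i^N}\indc{j\in\calS}\;+\;\sum_{i\in\calS}\frac{1}{|V_i^N|}\sum_{j\in V_i^N}\indc{j\notin\calS}\;=\;|\calS|.
\]
On the other hand, summing the column-sum identity over $j\in\calS$ and swapping the order of summation yields
\[
\sum_{i\in\calS}\frac{1}{|V_i^N|}\sum_{j\in V_i^N}\indc{j\in\calS}\;+\;\sum_{i\notin\calS}\frac{1}{|V_i^N|}\sum_{j\in V_i^N}\indc{j\in\calS}\;=\;|\calS|.
\]
Subtracting these two expressions cancels the common ``intra-$\calS$'' term and gives exactly the desired identity.

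There is no real obstacle here; the only thing to be careful about is making sure the undirectedness of $G_N$ together with the definition of doubly-stochastic in \prettyref{sec: model} indeed yields the column-sum condition $\sum_{i:\,j\in V_i^N}\frac{1}{|V_i^N|}=1$. This follows because $j\in V_i^N\iff i\in V_j^N$, so the column-sum at $j$ equals $\sum_{i\in V_j^N}\frac{1}{|V_i^N|}$, which is $1$ by assumption.
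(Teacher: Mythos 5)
Your proof is correct and rests on the same idea as the paper's: double-counting the degree-weighted edge mass incident to $\calS$ using both the row-sum and column-sum (i.e., doubly-stochastic) conditions, with the undirectedness of $G_N$ supplying the column-sum identity. The paper reaches the same cancellation by writing $\sum_{i\in\calS}$ as $\sum_{i\in[N]}-\sum_{i\notin\calS}$ and relabeling indices, so your computation of $|\calS|$ in two ways is just a cleaner arrangement of the identical argument.
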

\begin{remark}
Proposition \ref{prop: symmetry} says that if $G_N$ is doubly-stochastic, for any set $\calS\subseteq [N]$ in graph $G_N$, the ``flow'' out of set $\calS$ equals the ``flow'' into set $\calS$. 
\end{remark}
%
\begin{proof}[Proof of Proposition \ref{prop: symmetry}]
When $\calS=\emptyset$ or $\calS=[N]$, it is easy to see that
\begin{align*}
\sum_{i\in \calS} \frac{1}{|V_i^N|} \sum_{j\in V_i^N} \indc{j\notin \calS}  = 0 = \sum_{i\notin \calS} \frac{1}{|V_i^N|} \sum_{j\in V_i^N} \indc{j\in \calS}.
\end{align*}
For the more general subset of $\calS$ we have
\begin{align*}
\sum_{i\in \calS} \frac{1}{|V_i^N|} \sum_{j\in V_i^N} \indc{j\notin \calS} 
& =  \sum_{i=1}^N  \frac{1}{|V_i^N|} \sum_{j\in V_i^N}  \indc{j\notin \calS} -  \sum_{i\notin \calS} \frac{1}{|V_i^N|} \sum_{j\in V_i^N} \pth{1-\indc{j\in \calS}}\\
& = \sum_{i\notin \calS} \frac{1}{|V_i^N|} \sum_{j\in V_i^N} \indc{j\in \calS} + \sum_{i=1}^N  \frac{1}{|V_i^N|} \sum_{j\in V_i^N} \indc{j\notin \calS} 
-\sum_{i\notin \calS} \frac{1}{|V_i^N|} \sum_{j\in V_i^N} 1.
\end{align*}
To finish the proof, it remains to show that
\begin{align}
\label{eq: bound1}
\sum_{i=1}^N \frac{1}{|V_i^N|} \sum_{j\in V_i^N} \indc{j\notin \calS}=\sum_{i\notin \calS} \frac{1}{|V_i^N|} \sum_{j\in V_i^N} 1.
\end{align}
The RHS of \eqref{eq: bound1} can be written as
\begin{align}
\label{eq: rhs}
\nonumber
\sum_{i\notin \calS} \frac{1}{|V_i^N|} \sum_{j\in V_i^N} 1 & = \sum_{i\notin \calS} 1
\overset{(a)}{=} \sum_{i\notin \calS}  \sum_{j \in V_i^N} \frac{1}{|V_j^N|} \\ 
 &= \sum_{i=1}^N \sum_{j=1}^N \indc{i\notin \calS} \indc{(i,j)\in \calE} \frac{1}{|V_j^N|}.
\end{align}
where (a) follows from the fact that $G_N$ is doubly-stochastic.
The LHS of \eqref{eq: bound1} can be written as
\begin{align*}
\nonumber
\sum_{i=1}^N \frac{1}{|V_i^N|} \sum_{j\in V_i^N} \indc{j\notin \calS} 
& = \sum_{i=1}^N \sum_{j=1}^N \indc{(i,j)\in \calE^N} \frac{1}{|V_i^N|} \indc{j\notin \calS}\\
 &= \sum_{j=1}^N \sum_{i=1}^N \indc{(j,i)\in \calE^N} \frac{1}{|V_j^N|} \indc{i\notin \calS},
\end{align*}
proving \eqref{eq: bound1}.  

\end{proof}

Now we are ready to prove Lemma \ref{lm: random walker}.
\begin{proof}[Proof of Lemma \ref{lm: random walker}]
At any time $t$, if there is a jump in $Z_1^N(t)$, it must be true that
\begin{align}
\label{eq: condition}
Z_1^N(t-) \not=N, ~ \text{and} ~
Z_1^N(t-) +  Z_0^N(t-) ~ > ~ 0.
\end{align}
Note that it is possible that $ Z_0^N(t-)=N$.
Let $\calS(t-)$, $\calA(t-)$ and $\calB(t-)$ be a partition of set $[N]$ such that:
\begin{itemize}
\item Let $\calS(t-)$ denote the set of agents that currently prefer arm $a_1$, i.e.,
\begin{align*}
\calS(t-) : = \sth{i\in [N]: ~ \bm{X}_{i,1}^N(t-) = 1}.
\end{align*}
\item Let $\calA(t-)$ denote the set of agents that currently have no preference over the $K$ arms, i.e.,
\begin{align*}
\calA(t-) : = \sth{i\in [N]: ~ \bm{X}_{i,0}^N(t-) = 1}.
\end{align*}
\item Let $\calB(t-) := [N] -\calS(t-) - \calA(t-)$, i.e.,
\begin{align*}
\calB(t-) : = \sth{i\in [N]: ~ \sum_{k=2}^{K} \bm{X}_{i,k}^N(t-) = 1}.
\end{align*}
\end{itemize}
Note that $\abth{\calS(t-)} = Z_1^N(t-)$, $\abth{\calA(t-)} = Z_0^N(t-)$, and $\abth{\calB(t-)} = \sum_{k=2}^K Z_k^N(t-).$

The upwards drift of $Z_1^N(t-)$
can be written as
\begin{align}
\label{eq: ccc}
\lambda p_1 \frac{\mu}{K} |\calA(t-)| + \pth{1-\mu} \lambda p_1\sum_{i\in \calA(t-)} \frac{1}{|V_i^N|} \sum_{j \in V_i^N} \indc{j\in \calS(t-)} 
+ \lambda p_1 \sum_{i\in \calB(t-)} \frac{1}{|V_i^N|} \sum_{j \in V_i^N} \indc{j\in \calS(t-)}.
\end{align}
Similarly, the downwards drift of $Z_1^N(t-)$ can be written as
\begin{align}
\label{eq: aaa}
\nonumber
\lambda \sum_{i\in \calS(t-)} \frac{1}{|V_i^N|} \sum_{j \in V_i^N} \sum_{k=2}^K \indc{\bm{X}_{j,k}^N(t-) = 1} p_{k}
& \leq \lambda p_2 \sum_{i\in \calS(t-)} \frac{1}{|V_i^N|} \sum_{j \in V_i^N} \sum_{k=2}^K \indc{\bm{X}_{j,k}^N(t-) = 1}\\
& =  \lambda p_2 \sum_{i\in \calS(t-)} \frac{1}{|V_i^N|} \sum_{j \in V_i^N} \indc{j\in \calB(t-)}.
\end{align}

We consider two cases: (1) there are no links between $\calS(t-)$ and $\calB(t-)$, and (2) there is a link between $\calS(t-)$ and $\calB(t-)$.

\paragraph{Case 1:} Suppose that there are no links between $\calS(t-)$ and $\calB(t-)$.
In this case, it holds that
\begin{align}
\label{case: no links BS}
\sum_{i\in \calS(t-)} \frac{1}{|V_i^N|} \sum_{j \in V_i^N} \indc{j\in \calB(t-)} =0.
\end{align}
By \eqref{eq: aaa}, the downwards drift of $Z_1^N(t-)$ is zero. The upwards drift of $Z_1^N(t-)$ can be written as
\begin{align*}
\lambda p_1 \frac{\mu}{K} |\calA(t-)| + \pth{1-\mu} \lambda p_1\sum_{i\in \calA(t-)} \frac{1}{|V_i^N|} \sum_{j \in V_i^N} \indc{j\in \calS(t-)}.
\end{align*}

Suppose that $\calA(t-)=\emptyset$. By \eqref{case: no links BS} and the fact that $G_N$ is connected, we know that either $\abth{\calS(t-)}=0$ or $\abth{\calS(t-)}=N$, contradicting \eqref{eq: condition}. Thus, $\calA(t-)\not=\emptyset$. So, the upwards drift of $Z_1^N(t-)$ is nonzero.  
Therefore, by \cite[Proposition 4.10]{hajek2015random}, we have
\begin{align*}
&\mathbb{P}\left(\left. Z_1^N(t) = Z_1^N(t-) +1 \,\,\right|\,\, Z_1^N(t)\not=Z_1^N(t-) \right)   = 1.
\end{align*}

\paragraph{Case 2:} Suppose that there exists a link between $\calS(t-)$ and $\calB(t-)$. Thus,
\begin{align}
\label{eq: case 2}
\sum_{i\in \calS(t-)} \frac{1}{|V_i^N|} \sum_{j \in V_i^N} \indc{j\in \calB(t-)} >0.
\end{align}
We consider two sub-cases:
\begin{itemize}
\item [Case 2-1: ]
\[
\sum_{i\in \calB(t-)} \frac{1}{|V_i^N|} \sum_{j \in V_i^N} \indc{j\in \calS(t-)} \geq \sum_{i\in \calS(t-)} \frac{1}{|V_i^N|} \sum_{j \in V_i^N} \indc{j\in \calB(t-)};
\]
\item [Case 2-2: ]
\[
\sum_{i\in \calB(t-)} \frac{1}{|V_i^N|} \sum_{j \in V_i^N} \indc{j\in \calS(t-)} < \sum_{i\in \calS(t-)} \frac{1}{|V_i^N|} \sum_{j \in V_i^N} \indc{j\in \calB(t-)}.
\]
\end{itemize}
Note that when $G_N$ is a regular graph, case 2-1 always holds.
To see this, let $G_N$ be $D$-regular, where $D\ge 1$.
We have
\begin{align*}
\sum_{i\in \calB(t-)} \frac{1}{|V_i^N|} \sum_{j \in V_i^N} \indc{j\in \calS(t-)} 
& = \frac{1}{D} \sum_{i=1}^N \sum_{j=1}^N \indc{(i,j)\in \calE^N}\indc{i\in \calB(t-)} \indc{j\in \calS(t-)}\\
& = \sum_{i\in \calS(t-)} \frac{1}{|V_i^N|} \sum_{j \in V_i^N} \indc{j\in \calB(t-)}.
\end{align*}

Henceforth, we consider general doubly-stochastic graphs.

\paragraph{Case 2-1:}
Suppose that
\begin{align}
\label{eq: case: more inflow}
\sum_{i\in \calB(t-)} \frac{1}{|V_i^N|} \sum_{j \in V_i^N} \indc{j\in \calS(t-)} \geq \sum_{i\in \calS(t-)} \frac{1}{|V_i^N|} \sum_{j \in V_i^N} \indc{j\in \calB(t-)}.
\end{align}
The downward drift in \eqref{eq: aaa} can be bounded as
\begin{align*}
 \lambda \sum_{i\in \calS(t-)} \frac{1}{|V_i^N|} \sum_{j \in V_i^N} \sum_{k=2}^K \indc{\bm{X}_{j,k}^N(t-) = 1} p_{k}
&\leq  \lambda p_2 \sum_{i\in \calB(t-)} \frac{1}{|V_i^N|} \sum_{j \in V_i^N} \indc{j\in \calS(t-)}.
\end{align*}
Thus, the total rate of moving away from state $Z_1^N(t-)$ is upper bounded as
\begin{align}
\label{eq: eee}
\lambda p_1 \frac{\mu}{K} |\calA(t-)| + \pth{1-\mu} \lambda p_1\sum_{i\in \calA(t-)} \frac{1}{|V_i^N|} \sum_{j \in V_i^N} \indc{j\in \calS(t-)} 
+ \lambda (p_1+p_2)\sum_{i\in \calB(t-)} \frac{1}{|V_i^N|} \sum_{j \in V_i^N} \indc{j\in \calS(t-)}.
\end{align}
Thus, we have
\begin{align*}
 \mathbb{P}\left(\left. Z_1^N(t) = Z_1^N(t-) +1 \,\,\right|\,\, Z_1^N(t)\not=Z_1^N(t-) \right) 
& \ge \frac{\text{Eq.} ~\eqref{eq: ccc} ~~ }{\text{Eq.} ~ \eqref{eq: eee} ~~ }
\ge \frac{p_1}{ p_1 + p_2}.
\end{align*}

\paragraph{Case 2-2:}
Suppose that
\begin{align}
\label{eq: case: less inflow}
\sum_{i\in \calB(t-)} \frac{1}{|V_i^N|} \sum_{j \in V_i^N} \indc{j\in \calS(t-)} < \sum_{i\in \calS(t-)} \frac{1}{|V_i^N|} \sum_{j \in V_i^N} \indc{j\in \calB(t-)}.
\end{align}
By Proposition \ref{prop: symmetry}, we have
\begin{align*}
& \sum_{i \in \calS(t-)} \frac{1}{|V_i^N|} \sum_{j\in V_i^N} \indc{j\in \calB(t-)} + \sum_{i \in \calS(t-)} \frac{1}{|V_i^N|} \sum_{j\in V_i^N} \indc{j\in \calA(t-)} \\
& =  \sum_{i \in \calB(t-)} \frac{1}{|V_i^N|} \sum_{j\in V_i^N} \indc{j\in \calS(t-)} + \sum_{i \in \calA(t-)} \frac{1}{|V_i^N|} \sum_{j\in V_i^N} \indc{j\in \calS(t-)}.
\end{align*}
Rearrange the terms, we get
\begin{align}
\label{eq: diff}
\nonumber
& \sum_{i \in \calS(t-)} \frac{1}{|V_i^N|} \sum_{j\in V_i^N} \indc{j\in \calB(t-)} - \sum_{i \in \calB(t-)} \frac{1}{|V_i^N|} \sum_{j\in V_i^N} \indc{j\in \calS(t-)} \\
& = \sum_{i \in \calA(t-)} \frac{1}{|V_i^N|} \sum_{j\in V_i^N} \indc{j\in \calS(t-)} -\sum_{i \in \calS(t-)} \frac{1}{|V_i^N|} \sum_{j\in V_i^N} \indc{j\in \calA(t-)}.
\end{align}
%
Thus, we further bound the downward drift in \eqref{eq: aaa} as
\begin{align*}
\lambda p_2 \sum_{i\in \calS(t-)} \frac{1}{|V_i^N|} \sum_{j \in V_i^N} \indc{j\in \calB(t-)}
& = \lambda p_2 \sum_{i\in \calB(t-)} \frac{1}{|V_i^N|} \sum_{j\in V_i^N} \indc{j\in \calS(t-)}\\
& + \lambda p_2\pth{\sum_{i\in \calS(t-)} \frac{1}{|V_i^N|} \sum_{j \in V_i^N} \indc{j\in \calB(t-)} - \sum_{i\in \calB(t-)} \frac{1}{|V_i^N|} \sum_{j\in V_i^N} \indc{j\in \calS(t-)}} \\
& \overset{(a)}{=} \lambda p_2 \sum_{i\in \calB(t-)} \frac{1}{|V_i^N|} \sum_{j\in V_i^N} \indc{j\in \calS(t-)}\\
&+ \lambda p_2\pth{\sum_{i\in \calA(t-)} \frac{1}{|V_i^N|} \sum_{j \in V_i^N} \indc{j\in \calS(t-)} - \sum_{i\in \calS(t-)} \frac{1}{|V_i^N|} \sum_{j\in V_i^N} \indc{j\in \calA(t-)}} \\
& \leq \lambda p_2 \pth{\sum_{i\in \calB(t-)} \frac{1}{|V_i^N|} \sum_{j\in V_i^N} \indc{j\in \calS(t-)}
+ \sum_{i\in \calA(t-)} \frac{1}{|V_i^N|} \sum_{j \in V_i^N} \indc{j\in \calS(t-)}},
\end{align*}
where equality (a) follows from \eqref{eq: diff}.
Thus, the total rate of moving away from state is $Z_1^N(t-)$ is upper bounded as
\begin{align}
\label{eq: eee1}
\lambda p_1 \frac{\mu}{K} |\calA(t-)| + \lambda\pth{\pth{1-\mu} p_1 + p_2} \sum_{i\in \calA(t-)} \frac{1}{|V_i^N|} \sum_{j \in V_i^N} \indc{j\in \calS(t-)} 
+ \lambda (p_1+p_2)\sum_{i\in \calB(t-)} \frac{1}{|V_i^N|} \sum_{j \in V_i^N} \indc{j\in \calS(t-)}.
\end{align}
So we get
\begin{align*}
 \mathbb{P}\left(\left. Z_1^N(t) = Z_1^N(t-) +1 \,\,\right|\,\, Z_1^N(t)\not=Z_1^N(t-) \right)
 \ge  \frac{\text{Eq.} ~\eqref{eq: ccc} ~~ }{\text{Eq.} ~ \eqref{eq: eee1} ~~ }.
\end{align*}
Note that
\begin{align}
\label{eq: ob}
\frac{\mu}{K} |\calA(t-)|  =  \frac{\mu}{K}\sum_{i\in \calA(t-)} \frac{1}{|V_i^N|} \sum_{j\in V_i^N} 1
\geq \frac{\mu}{K} \sum_{i\in \calA(t-)} \frac{1}{|V_i^N|} \sum_{j\in V_i^N} \indc{j\in \calS(t-)} \geq 0.
\end{align}
Thus, 
\begin{align*}
\mathbb{P}\left(\left. Z_1^N(t) = Z_1^N(t-) +1 \,\,\right|\,\, Z_1^N(t)\not=Z_1^N(t-) \right) 
 & \ge \frac{\frac{\mu}{K}p_1+\pth{1-\mu} p_1}{\frac{\mu}{K}p_1+\pth{1-\mu} p_1 + p_2}.
\end{align*}

Putting all the cases together, we conclude Lemma \ref{lm: random walker}.
\end{proof}

\subsection{Proof of Lemma \ref{lm: initial wealth}}
\label{app: initial wealth}
For a given time $t_c$ and for each $i=1, \cdots, N$, let
\begin{align}
\label{def: bour rv}
L_i(t_c) = \bm{1}_{\sth{X_{i,0}^N(0) =1 ~ \& ~\text{agent $i$ wakes up only once in $[0, t_c]$ $\& ~ M_i(t_c)=1$}}},
\end{align}
where $M_i(t_c)$ is the memory of agent $i$ at time $t_c$. 
Since an agent wakes up whenever its Poisson clock ticks and the Poisson clocks are independent among agents, it holds that
$L_i(t_c), \forall ~ i=1, \cdots, N$ are independent.  In addition, by symmetry, $L_i(t_c), \forall ~ i=1, \cdots, N$ are identically distributed.
We have
\begin{align}
\label{lm: upper bound}
\expect{L_i(t_c) } \geq (t_c\lambda) \exp(-t_c\lambda) \frac{\mu}{K} c_0 p_1.
%
\end{align}
Choosing $t_c=\frac{1}{\lambda}$, by Hoeffding's inequality, we have
\begin{align*}
\mathbb{P}\left( \sum_{i=1}^N L_i \pth{\frac{1}{\lambda}} \leq (1-C)\frac{\mu c_0p_1}{e K}N \right) \leq \exp\pth{-2\pth{\frac{C\mu c_0p_1}{e K}}^2N}.
\end{align*}
In addition, we know $Z_1^N\pth{\frac{1}{\lambda}} \geq \sum_{i=1}^N L_i \pth{\frac{1}{\lambda}}$.
Thus,
\begin{align*}
\mathbb{P}\left( Z_1^N\pth{\frac{1}{\lambda}} \leq (1-C)\frac{\mu c_0p_1}{e K}N \right) \leq \exp\pth{-2\pth{\frac{C\mu c_0p_1}{e K}}^2N}.
\end{align*}

\section{Proof of Theorem \ref{thm:localConvergence}}
\label{app:localConvergence}

The proof follows the same line of argument as in \cite{budhiraja2017} and \cite{BHAMIDI2018}. Since
\begin{align*}
 \expect{\sup_{0\leq t \leq T} \sum_{k=0}^K \left(\bm{X}_{i, k}^N(t) - \bm{X}_{i, k}(t)\right)^2 } &\leq \sum_{k=0}^K \expect{ \sup_{0\leq t \leq T} \left(\bm{X}_{i, k}^N(t) - \bm{X}_{i, k}(t)\right)^2 },
\end{align*}
we just need to show that the right hand side converges to zero. For $k\geq 0$, using that $\bm{X}_{i,k}^N (0)  = \bm{X}_{i,k} (0)$, we have
\begin{align}
& \sup_{0\leq t \leq T} \left(\bm{X}_{i, k}^N(t) - \bm{X}_{i, k}(t)\right)^2  \nonumber \\
& \quad \leq 2 \sup_{0\leq t \leq T} \left( \int\limits_{[0,K)\times[0,t]} \pth{\indc{\bm{X}_{i,k}^N (s^-) =0} \indc{y \in C_{i,k}^{N, +} (s^-)} - \indc{\bm{X}_{i,k} (s^-) =0} \indc{y \in C_{i,k}^+(s^-)}} \mathcal{N}_i(dy,ds) \right)^2  \nonumber \\
&\qquad + 2 \sup_{0\leq t \leq T} \left( \int\limits_{[0,K)\times[0,t]} \pth{\indc{\bm{X}_{i,k}^N (s^-) =1} \indc{y \in C_{i,k}^{N, -} (s^-)} -  \indc{\bm{X}_{i,k} (s^-) =1} \indc{y \in C_{i,k}^-(s^-)} } \mathcal{N}_i(dy,ds) \right)^2  \nonumber
\end{align}
\begin{align}
& \leq 4 \sup_{0\leq t \leq T} \left( \int\limits_{[0,K)\times[0,t]} \pth{\indc{\bm{X}_{i,k}^N (s^-) =0} - \indc{\bm{X}_{i,k} (s^-) =0}} \indc{y \in C_{i,k}^{N, +} (s^-)} \mathcal{N}_i(dy,ds)\right)^2 \nonumber \\
 &\qquad + 4 \sup_{0\leq t \leq T} \left( \int\limits_{[0,K)\times[0,t]} \indc{\bm{X}_{i,k} (s^-) =0} \pth{ \indc{y \in C_{i,k}^{N, +} (s^-)} - \indc{y \in C_{i,k}^+(s^-)}} \mathcal{N}_i(dy, ds)\right)^2 \nonumber \\
 &\qquad + 4 \sup_{0\leq t \leq T} \left( \int\limits_{[0,K)\times[0,t]} \pth{\indc{\bm{X}_{i,k}^N (s^-) =1} - \indc{\bm{X}_{i,k} (s^-) =1} } \indc{y \in C_{i,k}^{N,-}(s^-)} \mathcal{N}_i(dy,ds)\right)^2 \nonumber \\
 &\qquad + 4 \sup_{0\leq t \leq T} \left( \int\limits_{[0,K)\times[0,t]} \indc{\bm{X}_{i,k} (s^-) =1} \pth{ \indc{y \in C_{i,k}^{N, -} (s^-)} - \indc{y \in C_{i,k}^-(s^-)}} \mathcal{N}_i(dy, ds)\right)^2. \label{eq:firstRHS}
\end{align}
We bound the terms above individually. For the first one, we have
\begin{align}
&\sup_{0\leq t\leq T}\left( \int\limits_{[0,K)\times[0,t]} \pth{\indc{\bm{X}_{i,k}^N (s^-) =0}  - \indc{\bm{X}_{i,k} (s^-) =0}} \indc{y \in C_{i,k}^{N, +}(s) } \mathcal{N}_i(dy, ds)\right)^2 \nonumber \\
& =\sup_{0\leq t\leq T}\pth{\int\limits_{[0,K)\times[0,t]} \pth{\indc{\bm{X}_{i,k}^N (s^-) =0}  - \indc{\bm{X}_{i,k} (s^-) =0}} \indc{y \in C_{i,k}^{N, +}(s) } \left(\mathcal{N}_i(dy, ds) - \lambda dyds + \lambda dyds\right)}^2 \nonumber \\
& \leq 2 \sup_{0\leq t\leq T}\pth{ \int\limits_{[0,K)\times[0,t]} \left(\indc{\bm{X}_{i,k}^N (s^-) =0}  - \indc{\bm{X}_{i,k} (s^-) =0} \right) \indc{y \in C_{i,k}^{N, +}(s^-) }\left(\mathcal{N}_i(dy, ds) - \lambda dyds\right) }^2 \nonumber \\
& \qquad\qquad\qquad\qquad + 2 \lambda^2 \sup_{0\leq t\leq T}\pth{\int\limits_{[0,K)\times[0,t]} \pth{\indc{\bm{X}_{i,k}^N (s) =0}  - \indc{\bm{X}_{i,k} (s) =0}} \indc{y \in C_{i,k}^{N, +}(s) }dy ds }^2. \label{eq:secondRHS}
\end{align}
Note that the process
\begin{align*}
\int\limits_{[0,K)\times[0,t]} \left(\indc{\bm{X}_{i,k}^N (s^-) =0}  - \indc{\bm{X}_{i,k} (s^-) =0} \right) \indc{y \in C_{i,k}^{N, +}(s^-) }\left(\mathcal{N}_i(dy, ds) - \lambda dyds\right)
\end{align*}
is a martingale (part 2 of Lemma 1.12 in \cite{Eberle2015}). Then, Doob's inequality yields
\begin{align*}
  &\expect{\sup_{0\leq t\leq T}\pth{ \int\limits_{[0,K)\times[0,t]} \left(\indc{\bm{X}_{i,k}^N (s^-) =0}  - \indc{\bm{X}_{i,k} (s^-) =0} \right) \indc{y \in C_{i,k}^{N, +}(s^-) }\left(\mathcal{N}_i(dy, ds) - \lambda dyds\right) }^2} \\
  &\qquad \leq 4\expect{\pth{ \int\limits_{[0,K)\times[0,T]} \left(\indc{\bm{X}_{i,k}^N (s^-) =0}  - \indc{\bm{X}_{i,k} (s^-) =0} \right) \indc{y \in C_{i,k}^{N, +}(s^-) }\left(\mathcal{N}_i(dy, ds) - \lambda dyds\right) }^2}
\end{align*}
Furthermore, we have (part 3 of Lemma 1.12 in \cite{Eberle2015})
\begin{align*}
  & \expect{\pth{ \int\limits_{[0,K)\times[0,T]} \left(\indc{\bm{X}_{i,k}^N (s^-) =0}  - \indc{\bm{X}_{i,k} (s^-) =0} \right) \indc{y \in C_{i,k}^{N, +}(s^-) }\left(\mathcal{N}_i(dy, ds) - \lambda dyds\right) }^2} \\
  &\qquad\qquad\qquad\qquad\qquad = \lambda \expect{ \int\limits_{[0,K)\times[0,T]} \left( \left(\indc{\bm{X}_{i,k}^N (s) =0}  - \indc{\bm{X}_{i,k} (s) =0} \right) \indc{y \in C_{i,k}^{N, +}(s) } \right)^2 dyds} \\
  &\qquad\qquad\qquad\qquad\qquad \leq \lambda \expect{ \int_{0}^T \left( \indc{\bm{X}_{i,k}^N (s) =0}  - \indc{\bm{X}_{i,k} (s) =0} \right)^2 ds} \\
  &\qquad\qquad\qquad\qquad\qquad = \lambda \expect{ \int_{0}^T \left( \bm{X}_{i,k}^N (s)  - \bm{X}_{i,k} (s) \right)^2 ds},
\end{align*}
where the inequality comes from the fact that $\int_0^K \indc{y \in C_{i,k}^{N, +}(s) } dy \leq 1$ for all $s\geq 0$. By Tonelli's theorem we have
\begin{align*}
\expect{ \int_{0}^T \left( \bm{X}_{i,k}^N (s)  - \bm{X}_{i,k} (s) \right)^2 ds} = \int_{0}^T \expect{ \left( \bm{X}_{i,k}^N (s)  - \bm{X}_{i,k} (s) \right)^2} ds,
\end{align*}
and thus the first term in Equation \eqref{eq:secondRHS} is upper bounded by
\begin{equation}\label{eq:firstBound}
8\lambda \int_{0}^T \expect{ \left( \bm{X}_{i,k}^N (s)  - \bm{X}_{i,k} (s) \right)^2} ds.
\end{equation}

On the other hand, since $\int_0^K \indc{y \in C_{i,k}^{N, +}(s) } dy \leq 1$ for all $s\geq 0$, we have
\begin{align*}
  &\expect{\sup_{0\leq t\leq T}\pth{\int\limits_{[0,K)\times[0,t]} \pth{\indc{\bm{X}_{i,k}^N (s) =0}  - \indc{\bm{X}_{i,k} (s) =0}} \indc{y \in C_{i,k}^{N, +}(s) }dy ds }^2} \\
  & \qquad\qquad\qquad \leq \expect{\sup_{0\leq t\leq T}\pth{\int\limits_{[0,K)\times[0,t]} \left| \indc{\bm{X}_{i,k}^N (s) =0}  - \indc{\bm{X}_{i,k} (s) =0} \right| \indc{y \in C_{i,k}^{N, +}(s) }dy ds }^2} \\
  & \qquad\qquad\qquad \leq \expect{\sup_{0\leq t\leq T}\pth{\int_0^t \left| \indc{\bm{X}_{i,k}^N (s) =0}  - \indc{\bm{X}_{i,k} (s) =0} \right| ds }^2}.
\end{align*}
Furthermore, Jensen's inequality yields
\begin{align}
  &\expect{\sup_{0\leq t\leq T}\pth{\int_0^t \left| \indc{\bm{X}_{i,k}^N (s) =0}  - \indc{\bm{X}_{i,k} (s) =0} \right| ds }^2} \nonumber \\
  & \qquad\qquad\qquad\qquad\qquad\qquad\qquad\qquad \leq \expect{\sup_{0\leq t\leq T} \int_0^t \left(\indc{\bm{X}_{i,k}^N (s) =0}  - \indc{\bm{X}_{i,k} (s) =0} \right)^2 ds } \nonumber \\
  & \qquad\qquad\qquad\qquad\qquad\qquad\qquad\qquad \leq \expect{ \int_{0}^T \left(\bm{X}_{i,k}^N (s) - \bm{X}_{i,k} (s)  \right)^2 ds } \nonumber \\
  & \qquad\qquad\qquad\qquad\qquad\qquad\qquad\qquad = \int_{0}^T \expect{ \left(\bm{X}_{i,k}^N (s) - \bm{X}_{i,k} (s)  \right)^2} ds , \nonumber
\end{align}
where we used Tonelli's theorem in the last equality. Thus, the second term in Equation \eqref{eq:secondRHS} is upper bounded by
\[ 2\lambda^2 \int_{0}^T \expect{ \left(\bm{X}_{i,k}^N (s) - \bm{X}_{i,k} (s)  \right)^2} ds. \]
Combining this with Equation \eqref{eq:firstBound}, we obtain that the first term in Equation \eqref{eq:firstRHS} is upper bounded by
\begin{align}
 & 8(4+\lambda)\lambda \int_{0}^T \expect{ \left(\bm{X}_{i,k}^N (s) - \bm{X}_{i,k} (s)  \right)^2} ds \nonumber \\
 &\qquad\qquad\qquad\qquad\qquad\qquad \leq 8(4+\lambda)\lambda \int_{0}^T \expect{ \sup_{0\leq u \leq s} \left(\bm{X}_{i,k}^N (u) - \bm{X}_{i,k} (u)  \right)^2} ds \nonumber \\
 &\qquad\qquad\qquad\qquad\qquad\qquad \leq 8(4+\lambda)\lambda \int_{0}^T \max\limits_{i\in[N]} \sum\limits_{j=0}^K \expect{ \sup_{0\leq u\leq s} \Big(\bm{X}^N_{i, j}(u) - \bm{X}_{i, j}(u)\Big)^2} ds. \label{eq:firstRHSBound}
\end{align}
Moreover, the same argument yields the same upper bound for the third term in Equation \eqref{eq:firstRHS}.\\

For the fourth term in Equation \eqref{eq:firstRHS}, an analogous argument yields the upper bound
\begin{equation}\label{eq:intermediateBound}
 8(4+\lambda)\lambda \int_0^T \expect{ \int\limits_{[0,K)} \left(\indc{y \in C_{i,k}^{N, -}(s) } - \indc{y \in C_{i,k}^{-}(s) } \right)^2 dy } ds.
\end{equation}
Let us define the sets
\begin{align*}
 \tilde{C}_{i,0}^{-}(t) &:= \bigcup_{j=1}^K \left[j-1,\,\, j-1+ \frac{\mu p_j}{K} + \left(1-\mu\right) \frac{1}{|V_i^N|} \sum\limits_{\ell\in V_i^N} p_j \indc{\bm{X}_{\ell, j} (t) =1} \right), \\
 \tilde{C}_{i,k}^{-}(t) &:= \bigcup_{\overset{j=1}{j\neq k}}^K \left[j-1,\,\, j-1+  \frac{1}{|V_i^N|} \sum\limits_{\ell\in V_i^N} p_j \indc{\bm{X}_{\ell, j} (t) =1} \right), \qquad \forall\, k\geq 1.
\end{align*}
We have
\begin{align}
  &\expect{ \int\limits_{[0,K)} \left(\indc{y \in C_{i,k}^{N, -}(s) } - \indc{y \in C_{i,k}^{-}(s) } \right)^2 dy } \nonumber \\
  &\qquad \leq \expect{ \int\limits_{[0,K)} \left(\indc{y \in C_{i,k}^{N, -}(s) } - \indc{y \in \tilde{C}_{i,k}^{-}(s) } + \indc{y \in \tilde{C}_{i,k}^{-}(s) } - \indc{y \in C_{i,k}^{-}(s) } \right)^2 dy} \nonumber \\
  &\qquad \leq 2 \expect{ \int\limits_{[0,K)} \left(\indc{y \in C_{i,k}^{N,-}(s)} - \indc{y \in \tilde{C}_{i,k}^{-}(s) } \right)^2 dy + \int\limits_{[0,K)} \left( \indc{y \in \tilde{C}_{i,k}^{-}(s) } - \indc{y \in C_{i,k}^{-}(s) } \right)^2 dy}. \label{eq:thirdRHS}
\end{align}
For the first term, we have
\begin{align}
  &\expect{ \int\limits_{[0,K)} \left(\indc{y \in C_{i,k}^{N,-}(s)} - \indc{y \in \tilde{C}_{i,k}^{-}(s) } \right)^2 dy} \nonumber \\
  & \qquad\qquad\qquad\qquad\qquad\qquad\qquad\qquad \leq \expect{ \sum\limits_{\overset{j=1}{j\neq k}}^K \frac{1}{|V_i^N|} \left| \sum\limits_{\ell=1}^{|V_i^N|} p_j \left( \indc{\bm{X}^N_{i_\ell, j}(s)=1} - \indc{\bm{X}_{i_\ell, j}(s)=1} \right) \right|} \nonumber \\
  & \qquad\qquad\qquad\qquad\qquad\qquad\qquad\qquad \leq \expect{ \sum\limits_{\overset{j=1}{j\neq k}}^K \frac{1}{|V_i^N|}  \sum\limits_{\ell=1}^{|V_i^N|} \left| \indc{\bm{X}^N_{i_\ell, j}(s)=1} - \indc{\bm{X}_{i_\ell, j}(s)=1} \right|} \nonumber \\
  & \qquad\qquad\qquad\qquad\qquad\qquad\qquad\qquad = \expect{ \sum\limits_{\overset{j=1}{j\neq k}}^K \frac{1}{|V_i^N|} \sum\limits_{\ell=1}^{|V_i^N|} \Big(\bm{X}^N_{i_\ell, j}(s) - \bm{X}_{i_\ell, j}(s)\Big)^2} \nonumber \\
  & \qquad\qquad\qquad\qquad\qquad\qquad\qquad\qquad \leq \max\limits_{i\in[N]} \expect{\sum\limits_{j=0}^K \Big(\bm{X}^N_{i, j}(s) - \bm{X}_{i, j}(s)\Big)^2}. \label{eq:secondBound}
\end{align}
For the second term, we have
\begin{align*}
  &\expect{ \int\limits_{[0,K)} \left( \indc{y \in \tilde{C}_{i,k}^{-}(s) } - \indc{y \in C_{i,k}^{-}(s) } \right)^2 dy} \\
   &\qquad\qquad\qquad\qquad\qquad\qquad\qquad\qquad = \expect{ \sum\limits_{\overset{j=1}{j\neq k}}^K \frac{1}{|V_i^N|} \left| \sum\limits_{\ell=1}^{|V_i^N|} p_j \Big( \indc{\bm{X}_{i_\ell, j}(s)=1} - \mathbb{P}\left( \bm{X}_{i, j} (s) =1 \right)\Big)\right|} \\
  &\qquad\qquad\qquad\qquad\qquad\qquad\qquad\qquad = \sum\limits_{\overset{j=1}{j\neq k}}^K \frac{1}{|V_i^N|} \expect{ \left| \sum\limits_{\ell=1}^{|V_i^N|} p_j \Big( \indc{\bm{X}_{i_\ell, j}(s)=1} - \mathbb{P}\left( \bm{X}_{i, j} (s) =1 \right)\Big)\right|} \\
  &\qquad\qquad\qquad\qquad\qquad\qquad\qquad\qquad \leq \sum\limits_{\overset{j=1}{j\neq k}}^K \frac{1}{|V_i^N|} \expect{ \left| \sum\limits_{\ell=1}^{|V_i^N|} \Big( \indc{\bm{X}_{i_\ell, j}(s)=1} - \mathbb{P}\left( \bm{X}_{i, j} (s) =1 \right)\Big)\right|}.
\end{align*}
By Jensen's inequality, we have
\begin{align*}
 &\sum\limits_{\overset{j=1}{j\neq k}}^K \frac{1}{|V_i^N|} \expect{ \left| \sum\limits_{\ell=1}^{|V_i^N|} \Big( \indc{\bm{X}_{i_\ell, j}(s)=1} - \mathbb{P}\left( \bm{X}_{i, j} (s) =1 \right)\Big)\right|} \\
 &\qquad\qquad\qquad\qquad\qquad\qquad \leq \sum\limits_{\overset{j=1}{j\neq k}}^K \frac{1}{|V_i^N|} \left( \expect{ \left( \sum\limits_{\ell=1}^{|V_i^N|} \Big( \indc{\bm{X}_{i_\ell, j}(s)=1} - \mathbb{P}\left( \bm{X}_{i, j} (s) =1 \right)\Big)\right)^2 } \right)^{\frac{1}{2}}.
\end{align*}
Since the initial conditions $\{\bm{X}_i(0)\}_{i\in[N]}$ are i.i.d., $\{\bm{X}_i(s)\}_{i\in[N]}$ are also i.i.d., for all $s\geq 0$. Thus
\begin{align*}
  &\sum\limits_{\overset{j=1}{j\neq k}}^K \frac{1}{|V_i^N|} \left( \expect{ \left( \sum\limits_{\ell=1}^{|V_i^N|} \Big( \indc{\bm{X}_{i_\ell, j}(s)=1} - \mathbb{P}\left( \bm{X}_{i, j} (s) =1 \right)\Big)\right)^2 } \right)^{\frac{1}{2}} \\
  &\qquad\qquad\qquad\qquad\qquad\qquad\qquad\qquad = \sum\limits_{\overset{j=1}{j\neq k}}^K \frac{1}{|V_i^N|} \left( \expect{  \sum\limits_{\ell=1}^{|V_i^N|} \left( \indc{\bm{X}_{i_\ell, j}(s)=1} - \mathbb{P}\left( \bm{X}_{i, j} (s) =1 \right)\right)^2 } \right)^{\frac{1}{2}} \\
  &\qquad\qquad\qquad\qquad\qquad\qquad\qquad\qquad = \sum\limits_{\overset{j=1}{j\neq k}}^K \frac{1}{|V_i^N|} \left( |V_i^N| \expect{  \left( \indc{\bm{X}_{1, j}(s)=1} - \mathbb{P}\left( \bm{X}_{1, j} (s) =1 \right)\right)^2 } \right)^{\frac{1}{2}} \\
  &\qquad\qquad\qquad\qquad\qquad\qquad\qquad\qquad = \sum\limits_{\overset{j=1}{j\neq k}}^K \frac{1}{\sqrt{|V_i^N|}} \left( \expect{  \left( \indc{\bm{X}_{1, j}(s)=1} - \mathbb{P}\left( \bm{X}_{1, j} (s) =1 \right)\right)^2 } \right)^{\frac{1}{2}} \\
  &\qquad\qquad\qquad\qquad\qquad\qquad\qquad\qquad \leq \frac{K-1}{\sqrt{|V_i^N|}}
\end{align*}
Combining this with equations \eqref{eq:intermediateBound}--\eqref{eq:secondBound} we obtain that the fourth term in Equation \eqref{eq:firstRHS} is upper bounded by
\begin{align}
  & 16(4+\lambda)\lambda \int_0^T \left( \max\limits_{i\in[N]} \sum\limits_{j=0}^K \expect{ \Big(\bm{X}^N_{i, j}(s) - \bm{X}_{i, j}(s)\Big)^2} + \frac{K-1}{\sqrt{|V_i^N|}} \right) ds \nonumber \\
  & \qquad\qquad\qquad \leq 16(4+\lambda)\lambda \int_0^T \left( \max\limits_{i\in[N]} \sum\limits_{j=0}^K \expect{ \sup_{0\leq u\leq s} \Big(\bm{X}^N_{i,j}(u) - \bm{X}_{i,j}(u)\Big)^2} + \frac{K-1}{\sqrt{|V_i^N|}} \right) ds. \label{eq:secondRHSBound}
\end{align}
Moreover, an analogous argument yields the upper bound
\begin{equation}
 16(4+\lambda)\lambda \int_0^T \left( \max\limits_{i\in[N]} \sum\limits_{j=0}^K \expect{ \sup_{0\leq u\leq s} \Big(\bm{X}^N_{i,j}(u) - \bm{X}_{i,j}(u)\Big)^2} + \frac{1}{\sqrt{|V_i^N|}} \right) ds \label{eq:thirdRHSBound}
\end{equation}
for the second term in Equation \eqref{eq:firstRHS}.\\

Finally, combining equations \eqref{eq:firstRHSBound}, \eqref{eq:secondRHSBound}, and \eqref{eq:thirdRHSBound}, and substituting in Equation \eqref{eq:firstRHS}, we have that
\begin{align*}
  &\expect{ \sup_{0\leq t \leq T} \abth{\bm{X}_{i, k}^N(t) - \bm{X}_{i, k}(t)}^2 } \\
   &\qquad\qquad \leq 48(4+\lambda)\lambda \int_0^T \left( \max\limits_{i\in[N]} \sum\limits_{j=0}^K \expect{ \sup_{0\leq u\leq s} \Big(\bm{X}^N_{i, j}(u) - \bm{X}_{i, j}(u)\Big)^2} \right) ds + \frac{16(4+\lambda)\lambda KT}{\sqrt{|V_i^N|}},
\end{align*}
for all $k\geq 0$. It follows that
\begin{align*}
  &\max\limits_{i\in[N]} \sum_{k=0}^K \expect{ \sup_{0\leq t \leq T} \abth{\bm{X}_{i, k}^N(t) - \bm{X}_{i, k}(t)}^2 } \\
   &\qquad\qquad \leq 48(4+\lambda)\lambda(K+1) \int_0^T \left( \max\limits_{i\in[N]} \sum\limits_{k=0}^K \expect{ \sup_{0\leq u\leq s} \Big(\bm{X}^N_{i, k}(u) - \bm{X}_{i, k}(u)\Big)^2} \right) ds \\
   &\qquad\qquad\qquad\qquad\qquad\qquad\qquad\qquad\qquad\qquad\qquad\qquad\qquad\qquad + \frac{16(4+\lambda)\lambda KT(K+1)}{\sqrt{D_{\min}^N}}.
\end{align*}
Applying Gronwall's inequality, we obtain
\[ \max\limits_{i\in[N]} \sum_{k=0}^K \expect{ \sup_{0\leq t \leq T} \abth{\bm{X}_{i, k}^N(t) - \bm{X}_{i, k}(t)}^2 } \leq \frac{16(4+\lambda)\lambda KT(K+1)}{\sqrt{D_{\min}^N}} \exp\Big( 48(4+\lambda)\lambda(K+1)T \Big), \]
which concludes the proof.

\section{Proof of Theorem \ref{thm:transient}}
\label{app: detailed proof of mean-field}
Let us define the function $\eta:[0,T]\to[0,1]^{K+1}$, such that
\begin{equation}\label{eq:deterministicLimit}
\eta_k(t) := \mathbb{E}\big[\bm{X}_{1,k}(t)\big],
\end{equation}
for all $t\in[0,T]$, and for all $k=0,\dots,K$, where $\bm{X}_1(\cdot)$ is as in \eqref{eq:limitProcess}, with initial distribution $y$. Recall that
\begin{equation*}
  Y^N_k(t) = \frac{1}{N} \sum\limits_{i=1}^N \bm{X}_{i,k}^N (t),
\end{equation*}
for all $k=0,\dots,N$. Since
\[ \left|\frac{1}{N} \sum\limits_{i=1}^N \bm{X}_{i, k}^N(t) - \eta_k(t)\right| \leq 1, \]
we have
\[\expect{\sup_{0\leq t \leq T} \sum_{k=0}^K \left(\frac{1}{N} \sum\limits_{i=1}^N \bm{X}_{i, k}^N(t) - \eta_k(t)\right)^2 } \leq \expect{\sup_{0\leq t \leq T} \sum_{k=0}^K \left|\frac{1}{N} \sum\limits_{i=1}^N \bm{X}_{i, k}^N(t) - \eta_k(t)\right| }. \]
Furthermore,
\begin{align}
  &\expect{\sup_{0\leq t \leq T} \sum_{k=0}^K \left|\frac{1}{N} \sum\limits_{i=1}^N \bm{X}_{i, k}^N(t) - \eta_k(t)\right| } \nonumber \\
  &\qquad = \expect{\sup_{0\leq t \leq T} \sum_{k=0}^K \left|\frac{1}{N} \sum\limits_{i=1}^N \bm{X}_{i, k}^N(t) -\bm{X}_{i, k}(t) + \bm{X}_{i, k}(t) - \eta_k(t)\right| } \nonumber \\
  &\qquad \leq \expect{\sup_{0\leq t \leq T} \sum_{k=0}^K \left|\frac{1}{N} \sum\limits_{i=1}^N \bm{X}_{i, k}^N(t) -\bm{X}_{i, k}(t) \right| + \left| \frac{1}{N} \sum\limits_{i=1}^N \bm{X}_{i, k}(t) - \eta_k(t)\right| } \nonumber \\
  &\qquad \leq \expect{\sup_{0\leq t \leq T} \sum_{k=0}^K \left|\frac{1}{N} \sum\limits_{i=1}^N \bm{X}_{i, k}^N(t) - \bm{X}_{i, k}(t)\right| } + \expect{\sup_{0\leq t \leq T} \sum_{k=0}^K \left|\frac{1}{N} \sum\limits_{i=1}^N \bm{X}_{i, k}(t) - \eta_k(t)\right| }. \label{eq:twoTerms}
\end{align}
We first show that the second term converges to zero. Since
\[ \sup_{0\leq t \leq T} \sum_{k=0}^K \left|\frac{1}{N} \sum\limits_{i=1}^N \bm{X}_{i, k}(t) - y_k(t)\right| \leq 2, \]
the Dominated Convergence Theorem implies that
\begin{align*}
   \lim\limits_{N\to\infty} \expect{\sup_{0\leq t \leq T} \sum_{k=0}^K \left|\frac{1}{N} \sum\limits_{i=1}^N \bm{X}_{i, k}(t) - \eta_k(t)\right| } = \expect{ \lim\limits_{N\to\infty} \sup_{0\leq t \leq T} \sum_{k=0}^K \left|\frac{1}{N} \sum\limits_{i=1}^N \bm{X}_{i, k}(t) - \eta_k(t)\right| }.
\end{align*}
Furthermore, since the initial conditions of the processes $\bm{X}_{i}(\cdot)$ are i.i.d., then the whole processes are also i.i.d. by construction. Combining this with the fact that
\begin{equation*}
   \mathbb{E}\big[\bm{X}_{i,k}(t)\big] = \eta_k(t),
\end{equation*}
for all $k=0,\dots,K$, $i\in[N]$, and $t\in[0,T]$, and using standard arguments like in \cite{TX12,GTZ17}, it can be shown that
\begin{align*}
  \lim\limits_{N\to\infty} \sup_{0\leq t \leq T} \sum_{k=0}^K \left|\frac{1}{N} \sum\limits_{i=1}^N \bm{X}_{i, k}(t) - \eta_k(t)\right|  = 0, \qquad a.s.,
\end{align*}
and thus
\begin{align*}
  \mathbb{E}\left[\lim\limits_{N\to\infty} \sup_{0\leq t \leq T} \sum_{k=0}^K \left|\frac{1}{N} \sum\limits_{i=1}^N \bm{X}_{i, k}(t) - \eta_k(t)\right| \right]  = 0.
\end{align*}
For the first term in Equation \eqref{eq:twoTerms}, we have
\begin{align*}
  \expect{\sup_{0\leq t \leq T} \sum_{k=0}^K \left|\frac{1}{N} \sum\limits_{i=1}^N \bm{X}_{i, k}^N(t) - \bm{X}_{i, k}(t)\right| } &\leq \expect{\sup_{0\leq t \leq T} \frac{1}{N} \sum\limits_{i=1}^N \sum_{k=0}^K \left| \bm{X}_{i, k}^N(t) - \bm{X}_{i, k}(t)\right| } \\
  &\leq \frac{1}{N} \sum\limits_{i=1}^N \expect{\sup_{0\leq t \leq T} \sum_{k=0}^K \left| \bm{X}_{i, k}^N(t) - \bm{X}_{i, k}(t)\right| } \\
  &\leq \max\limits_{i\in[N]} \expect{\sup_{0\leq t \leq T} \sum_{k=0}^K \left| \bm{X}_{i, k}^N(t) - \bm{X}_{i, k}(t)\right| }.
\end{align*}
Since
\[ \left| \bm{X}_{i, k}^N(t) - \bm{X}_{i, k}(t)\right| = \left( \bm{X}_{i, k}^N(t) - \bm{X}_{i, k}(t)\right)^2, \]
and
\[ \lim_{N\to\infty} D_{\min}^N = \infty, \]
Theorem \ref{thm:localConvergence} implies that
\[ \lim\limits_{N\to\infty} \max\limits_{i\in[N]} \expect{\sup_{0\leq t \leq T} \sum_{k=0}^K \left| \bm{X}_{i, k}^N(t) - \bm{X}_{i, k}(t)\right| } = 0, \]
and thus the first term in Equation \eqref{eq:twoTerms} also converges to $0$.

The rest of the proof is devoted to showing that $\eta$ is the solution of the ODE defined by equations \eqref{limit 0} and \eqref{limit 1}, with initial condition $q$. Since $\eta_k(0)=\mathbb{E}\big[\bm{X}_{1,k}^N(0)\big]=q_k$, we have that
\begin{align*}
  \eta_k(t) & = \mathbb{E}\left[\bm{X}_{1,k}^N(0)\right] +
\mathbb{E}\left[ \int\limits_{[0,K)\times[0,t]} \pth{\indc{\bm{X}_{1,k} (s^-) =0} \indc{y \in C_{1,k}^+(s^-)} -  \indc{\bm{X}_{1,k} (s^-) =1} \indc{ y \in C_{1,k}^-(s^-)} } \mathcal{N}_1(dy,ds) \right] \\
&= q_k + \mathbb{E}\left[ \int\limits_{[0,K)\times[0,t]} \pth{\indc{\bm{X}_{1,k} (s) =0} \indc{y \in C_{1,k}^+(s)} -  \indc{\bm{X}_{1,k} (s) =1} \indc{ y \in C_{1,k}^-(s)} } \lambda dyds \right] \\
& + \mathbb{E}\left[ \int\limits_{[0,K)\times[0,t]} \pth{\indc{\bm{X}_{1,k} (s^-) =0} \indc{y \in C_{1,k}^+(s^-)} -  \indc{\bm{X}_{1,k} (s^-) =1} \indc{ y \in C_{1,k}^-(s^-)} } \big(\mathcal{N}_1(dy,ds) - \lambda dyds \big)\right].
\end{align*}
Since the term inside the second expectation is a martingale that starts at $0$ (part 2 of Lemma 1.12 in \cite{stochasticAnalysis}), Doob's optional stopping theorem implies that its expectation is equal to $0$ as well. As a result, we have that
\begin{align*}
  \eta_k(t) & = q_k + \mathbb{E}\left[ \int\limits_{[0,K)\times[0,t]} \pth{\indc{\bm{X}_{1,k} (s) =0} \indc{y \in C_{1,k}^+(s)} -  \indc{\bm{X}_{1,k} (s) =1} \indc{ y \in C_{1,k}^-(s)} } \lambda dyds \right] \\
&= q_k + \lambda  \int\limits_{[0,K)\times[0,t]} \left( \mathbb{E}\left[ \indc{\bm{X}_{1,k} (s) =0} \indc{y \in C_{1,k}^+(s)} \right] - \mathbb{E}\left[ \indc{\bm{X}_{1,k} (s) =1} \indc{ y \in C_{1,k}^-(s)} \right] \right) dyds,
\end{align*}
where in the last equality we used Fubini's theorem. For $k=0$, we have
\begin{align}
  \eta_0(t) & = q_0 - \lambda \int\limits_{[0,K)\times[0,t]} \mathbb{E}\left[ \indc{\bm{X}_{1,0} (s) =1} \right] \indc{ y \in C_{1,0}^-(s)} dyds, \nonumber \\
  &= q_0 - \lambda \int_0^t \eta_0(s) \sum\limits_{j=1}^K p_j \left( \frac{\mu}{K} + (1-\mu) \eta_j(s) \right) ds. \label{eq:drift0}
\end{align}
For $k\geq 1$, we have
\begin{align}
  \eta_k(t) &= q_k + \lambda  \int\limits_{[0,K)\times[0,t]} \left( \mathbb{E}\left[ \left(1-\indc{\bm{X}_{1,k} (s) =1}\right) \indc{y \in C_{1,k}^+(s)} \right] - \mathbb{E}\left[ \indc{\bm{X}_{1,k} (s) =1} \right] \indc{ y \in C_{1,k}^-(s)} \right) dyds, \nonumber \\
  &\qquad = q_k +\lambda \int_0^t p_k\left(\frac{\mu}{K} \eta_0(s) +  \big(1-\mu\eta_0(s)\big) \eta_k(s) \right) - p_k \eta_k(s)^2 -  \eta_k(s) \sum\limits_{\overset{j=1}{j\neq k}}^K  p_j \eta_j(s) ds \nonumber \\
   &\qquad = q_k + \lambda \int_0^t \pth{(1-\mu)p_k \eta_0(s)+ \sum_{j=1}^K (p_k-p_j)\eta_j(s)}\eta_k(s) + \eta_0(s) \frac{\mu}{K} p_k ds, \label{eq:driftj}
\end{align}
where the last equality comes from simple algebraic manipulation of the expressions. Combining equations \eqref{eq:drift0} and \eqref{eq:driftj} we conclude that $\eta(\cdot)$ is the solution to the ODE defined by equations \eqref{limit 0} and \eqref{limit 1}, with initial condition $q$.

\section{Proof of Theorem \ref{thm:expConvergence}} \label{app:proofExpConvergence}

We only prove the second case. The first one is just a special case of the proof.\\

We start by bounding the convergence rate of $y_0$. Our first characterization may be loose. However, we can use this loose bound to more refined characterization of the convergence rate of $y_1$. From \eqref{limit 0}, we have
\begin{align}
\label{eq: con rate lb}
\dot{y_0}(t) &= - y_0(t) \lambda \frac{\mu}{K}\sum_{k^{\prime}=1}^K p_{k^{\prime}} - y_0(t) \lambda \sum_{k^{\prime}=1}^K (1-\mu) p_{k^{\prime}} y_{k^{\prime}}(t)] ~
 \leq  - y_0(t) \lambda \frac{\mu}{K}\sum_{k^{\prime}=1}^K p_{k^{\prime}}.
\end{align}
Then, Gronwall's inequality implies
\begin{align}
\label{eq: loose bound}
y_0(t) \leq y_0(0)\exp \sth{- \pth{\lambda \frac{\mu}{K}\sum_{k=1}^K p_{k} } t}.
\end{align}
Although the bound in \eqref{eq: loose bound} is only for one entry of $y$, it can help us to get a convergence rate for $y_1$. Note that, at time
\[ \bar{t}_c = \frac{\log \frac{1}{c}}{\lambda \frac{\mu}{K}\sum_{k=1}^K p_k}, \]
we have
\begin{align*}
y_0(\bar{t}_c) \leq y_0(0) c.
\end{align*}
By \eqref{limit 1}, we know that
\begin{align*}
y_1(\bar{t}_c) ~ \geq ~ \frac{y_0(0)(1-c)}{K}.
\end{align*}

\vskip \baselineskip

On the other hand, from \eqref{limit 0}, we have
\begin{align}
\label{rate: bound 1}
\dot{y_1}(t)&=  y_0(t) \lambda  \frac{\mu}{K} p_1 +  \lambda \pth{(1-\mu)p_1 y_0(t)+ \sum_{k=1}^K (p_1-p_{k})y_{k}(t)}y_1(t) \nonumber \\
&\geq y_0(t) y_1(t) \lambda  \frac{\mu}{K} p_1 + \lambda \pth{(1-\mu)p_1 y_0(t)+ (p_1-p_2) \sum_{k=2}^K y_{k}(t)}y_1(t) \nonumber \\
&= \lambda \pth{\left(1-\mu+\frac{\mu}{K}\right)p_1 y_0(t)+ (p_1-p_2) \sum_{k=2}^K y_{k}(t)}y_1(t) \nonumber \\
&\geq  \lambda \cdot \min\left\{\left(1-\mu+\frac{\mu}{K}\right)p_1, p_1-p_2 \right\} \pth{ y_0(t)+  \sum_{k=2}^K y_{k}(t)}y_1(t) \nonumber \\
&= \lambda \cdot \min\left\{\left(1-\mu+\frac{\mu}{K}\right)p_1, p_1-p_2 \right\}  \pth{1- y_1(t)}y_1(t).
\end{align}
Let us define
\[ R\triangleq \lambda \min\left\{\left(1-\mu+\frac{\mu}{K}\right)p_1, p_1-p_2 \right\}, \]
and let $z_1$ be an auxiliary ODE equation such that
\begin{align}\label{ode: aux 1}
  \dot{z_1} &  = R \pth{1 - z_1}z_1,
\end{align}
with
\begin{align}
\label{eq: boundary y1}
z_1(0) \triangleq  y_1(\bar{t}_c) ~ \geq ~ \frac{y_0(0)(1-c)}{K}.
\end{align}
It can be shown that for all $t\in [\bar{t}_c, \infty)$,
\begin{align}
\label{aux: dom 0}
y_1 (\bar{t}_c+t) \geq z_1(t).
\end{align}
%
%
%
Thus, the convergence rate of $z_1$ provides a lower bound of the convergence rate of the original ODE system.
Note that $z_1$ is an autonomous and separable. Thus, we have
\begin{align*}
z_1(t) &= 1- \frac{1}{\frac{z_1(0)}{1-z_1(0)} \exp \big(R t \big) +1}\\
 &\geq 1- \frac{1}{\frac{y_0(0)(1-c)}{K-y_0(0)(1-c)} \exp \big(R t \big) +1},
\end{align*}
where the last inequality follows from \eqref{eq: boundary y1}.
By \eqref{aux: dom 0}, we know that $y_1(\bar{t}_c+ t)\geq z_1(t)$. Therefore, we conclude that
\[ y_1(t) \geq 1- \frac{1}{\frac{y_0(0)(1-c)}{K-y_0(0)(1-c)} \exp \big[R (t-t_c) \big] +1}, \]
for all $t\geq \bar{t}_c$.

\end{document}